\newif\ifacm
\renewcommand\footnotetextcopyrightpermission[1]{}
\newtheorem{theorem}{Theorem}[section]
\newtheorem{lemma}[theorem]{Lemma}
\newtheorem{definition}{Definition}
\newtheorem{observation}{Observation}
\newtheorem{assumption}{Assumption}
\Crefname{listing}{Algorithm}{Algorithms}
\def\ContinueLineNumber{\lstset{firstnumber=last}}
\def\StartLineAt#1{\lstset{firstnumber=#1}}
\titlespacing*\section{0pt}{6pt plus 4pt minus 2pt}{3pt plus 2pt minus 2pt}
\titlespacing*\subsection{0pt}{5pt plus 4pt minus 2pt}{3pt plus 2pt minus 2pt}
\titlespacing*\subsubsection{0pt}{5pt plus 4pt minus 2pt}{3pt plus 2pt minus 2pt}
\titlespacing*\paragraph{10pt}{2pt plus 1pt minus 1pt}{3pt plus 2pt minus 2pt}
\newcommand{\sysname}{SWARM\xspace}
\newcommand{\slogan}{Replicating Shared Disaggregated-Memory Data in No Time}
\newcommand{\protocol}{Safe-Guess\xspace}
\newcommand{\innout}{In-n-Out\xspace}
\newcommand{\kvs}{\sysname-KV\xspace}
\newcommand{\us}{µs\xspace}
\renewcommand{\t}[1]{\texttt{\footnotesize #1}}
\newcommand{\rt}{roundtrip\xspace}
\newcommand{\rts}{roundtrips\xspace}
\newcommand{\writeop}{\textsc{write}\xspace}
\newcommand{\readop}{\textsc{read}\xspace}
\newcommand{\casop}{\textsc{cas}\xspace}
\newcommand{\maxop}{\textsc{max}\xspace}
\newcommand{\getop}{\textsc{get}\xspace}
\newcommand{\updateop}{\textsc{update}\xspace}
\newcommand{\insertop}{\textsc{insert}\xspace}
\newcommand{\deleteop}{\textsc{delete}\xspace}
\newcommand{\wmr}{Max Register\xspace}
\newcommand{\false}{\t{False}\xspace}
\newcommand{\true}{\t{True}\xspace}
\newenvironment{myitemize}{\begin{list}{\labelitemi}{%
\setlength{\topsep}{0.5pt plus 0pt minus 0pt}%
\setlength{\itemsep}{0pt plus 0pt minus 0pt}%
\setlength{\parsep}{0pt plus 0pt minus 0pt}%
\setlength{\parskip}{0pt plus 0pt minus 0pt}%
}}{\end{list}}
  \providecommand\BibTeX{{%
    \normalfont B\kern-0.5em{\scshape i\kern-0.25em b}\kern-0.8em\TeX}}}
\begin{document}

%%
%% The "title" command has an optional parameter,
%% allowing the author to define a "short title" to be used in page headers.
\title
[\sysname: \slogan]
{\LARGE \sysname: \slogan}

%%
%% The "author" command and its associated commands are used to define
%% the authors and their affiliations.
%% Of note is the shared affiliation of the first two authors, and the
%% "authornote" and "authornotemark" commands
%% used to denote shared contribution to the research.

\author[Murat]{Antoine Murat}
\email{antoine.murat@epfl.ch}
\affiliation{
  \institution{École Polytechnique Fédérale de Lausanne (EPFL)}
  \country{Switzerland}
}
\authornote{These authors contributed equally to this work.}

\author[Burgelin]{Clément Burgelin}
\email{clement.burgelin@epfl.ch}
\affiliation{
  \institution{École Polytechnique Fédérale de Lausanne (EPFL)}
  \country{Switzerland}
}
\authornotemark[1]

\author[Xygkis]{Athanasios Xygkis}
\email{athanasios.xygkis@oracle.com}
\affiliation{
  \institution{Oracle Labs}
  \country{Switzerland}
}

\author[Zablotchi]{Igor Zablotchi}
\email{igor@mystenlabs.com}
\affiliation{
  \institution{Mysten Labs}
  \country{Switzerland}
}

\author[Aguilera]{Marcos K. Aguilera}
\email{marcos-k.aguilera@broadcom.com}
\affiliation{
  \institution{VMware Research Group}
  \country{United States}
}

\author[Guerraoui]{Rachid Guerraoui}
\email{rachid.guerraoui@epfl.ch}
\affiliation{
  \institution{École Polytechnique Fédérale de Lausanne (EPFL)}
  \country{Switzerland}
}

%%
%% The abstract is a short summary of the work to be presented in the
%% article.
\begin{abstract}
Memory disaggregation is an emerging data center architecture
    that improves resource utilization and scalability.
Replication is key to ensure the
    fault tolerance
    of applications, but replicating shared data in
    disaggregated memory is hard.
We propose \sysname (Swift WAit-free Replication in disaggregated Memory),
    the first replication scheme for
    in-disaggregated-memory shared objects to provide
    (1) single-\rt \readop{}s and \writeop{}s in the common case,
    (2) strong consistency (linearizability), and
    (3) strong liveness (wait-freedom).
\sysname makes two independent contributions.
The first is \protocol, a novel wait-free replication protocol with
    single-\rt operations.
The second is \innout, a novel
    technique to provide conditional atomic update and atomic retrieval
    of large buffers in disaggregated memory in one \rt.
Using \sysname, we build \kvs,
    a low-latency, strongly consistent and highly available disaggregated key-value store.
We evaluate \kvs and find that it has marginal
    latency overhead compared to an unreplicated key-value store,
    and that it offers much lower latency and better availability
    than FUSEE, a state-of-the-art replicated disaggregated key-value
    store.
\end{abstract}

\ifacm
%%
%% The code below is generated by the tool at http://dl.acm.org/ccs.cfm.
%% Please copy and paste the code instead of the example below.
%%
\begin{CCSXML}
<ccs2012>
<concept>
<concept_id>10010520.10010575.10010577</concept_id>
<concept_desc>Computer systems organization~Reliability</concept_desc>
<concept_significance>500</concept_significance>
</concept>
<concept>
<concept_id>10010520.10010575.10010578</concept_id>
<concept_desc>Computer systems organization~Availability</concept_desc>
<concept_significance>500</concept_significance>
</concept>
</ccs2012>
\end{CCSXML}

\ccsdesc[500]{Computer systems organization~Reliability}
\ccsdesc[500]{Computer systems organization~Availability}

%%
%% Keywords. The author(s) should pick words that accurately describe
%% the work being presented. Separate the keywords with commas.
% \keywords{Single-\rt, replication, MRMW, atomic register, ABD, fast path, disaggregated memory, low latency, microsecond scale, RDMA} % Not mandatory and destroys the layout
\fi

%%
%% This command processes the author and affiliation and title
%% information and builds the first part of the formatted document.
\maketitle

\ifacm
\pagestyle{plain} % no headers
\pagenumbering{gobble} % no page numbers
\fi

\section{Introduction}\label{sec:intro}

Memory disaggregation allows
  servers to access external memory provided by a set of
  memory nodes connected to a low-latency high-throughput fabric,
  using technologies such as RDMA~\cite{pfister2001infiniband} and, more recently, CXL~\cite{intelcxl}.
Memory disaggregation improves utilization of memory and hence decreases its relative
   cost~\cite{pond,zhou2022carbink,gu2017infiniswap}.
However, failures of the memory nodes can severely disrupt users and
  decrease overall system reliability.
To tolerate failures, traditional systems replicate data so that it remains accessible even
  if some of the replicas fail.
Unfortunately, current replication schemes are ill-suited for disaggregated memory,
because of two issues.
First, they require several network \rts to the memory nodes, which at least
   doubles the latency of disaggregated memory (\S\ref{sec:background:abd}). 
Second, they often require running code next to the data for concurrency control
  (e.g., to track timestamps, filter messages with old timestamps, issue a vote, etc.), which may be impossible or undesirable: the disaggregated memory may have no compute 
  capability (e.g., with CXL), or it may impose additional overhead to run code 
  (e.g., RDMA requires costly two-sided operations~\cite{kalia2019erpc, kalia2014herd}).
These issues are particularly problematic for systems requiring
 low latency, such as data stores in trading systems, control systems, and microsecond-scale microservices.

We propose \sysname
  (Swift WAit-free Replication in disaggregated Memory),
a new replication protocol for shared data in disaggregated memory.
\sysname provides several important features for our setting:
  it can read or write shared objects in one network \rt in the common case, when there are no failures or contention,
  and clocks are nearly synchronized;
  it can replicate both small and big objects;
  it provides (a) strong consistency in the form of linearizability~\cite{herlihy1990-linearizability},
  and (b) strong liveness in the form of wait-freedom~\cite{herlihy1991-waitfree}.

Designing \sysname required overcoming
multiple challenges.
First, the protocol must handle concurrent \readop{}s and \writeop{}s on the same object,
  while ensuring strong consistency,
  with little
  communication in the common case.
Second, the protocol cannot run computation in the memory nodes.
Third, the protocol must atomically handle \readop{}s and \writeop{}s of objects that are larger than the
  largest atomic updates supported by the disaggregated memory.

We address these challenges through a two-step modular design.
First, we design
  \protocol, a protocol
  that provides
  most of the properties we desire (single-\rt operations, linearizability,
  wait-freedom), but
  requires memory nodes to
  support a \emph{max register} object
that tracks the maximum value written to it out of large values (larger than the word size).
Internally, \protocol' \writeop{}s guess ordering timestamps speculatively to save a \rt; commonly, guesses are proven correct and \writeop{}s finish in one \rt; otherwise, \protocol uses a novel wait-free locking mechanism called \textit{timestamp locks} to resolve conflicts with potential readers, so that \writeop{}s can safely re-execute with better timestamps if needed.

Second, we introduce a novel technique, called \innout, that allows us to implement, on memory nodes with no compute, a
  max register for large values
  that takes a single
  \rt in the common case.
Doing so requires providing conditional updates
   on large values, but existing techniques
   take
   multiple \rts as
   they incur locking or pointer-chasing overheads to perform updates in-place
   or out-of-place, respectively.
Our \innout scheme achieves one \rt by simultaneously doing in-place and out-of-place updates:
   it uses in-place updates without locks to execute quickly when there is no contention, and
   it falls back to out-of-place updates when an in-place \readop{} finds corrupted data.
Furthermore, \innout optimizes out-of-place \writeop{}s to run in one \rt using 
    hardware ordering guarantees.

We combine \protocol and \innout to obtain \sysname.
To demonstrate the utility of \sysname, we build
\kvs,
    a strongly consistent and highly available disaggregated key-value store
    with ultra-low latency where clients directly access data on
    memory nodes.
\kvs can serve \insertop{}s, \updateop{}s, \getop{}s and \deleteop{}s in a single \rt
    and continue operating despite the failure of clients and memory nodes.
We evaluate \kvs against two main baselines:
    (1) a raw disaggregated key-value store that is unreplicated and has no concurrency control but establishes how fast a key-value store can be;
    and (2) FUSEE~\cite{fusee}, the state-of-the-art for disaggregated key-value stores, which is replicated and supports concurrency.
Through YCSB benchmarks, we observe that \kvs has sub-RTT
    latency overhead compared to the raw baseline (0.5\,\us for \getop{}s and 1.5\,\us for \updateop{}s), and has
    significantly lower latency than FUSEE
    (up to 2$\times$ faster \getop{}s and 3.4$\times$ faster \updateop{}s),
    as well as better availability (no downtime).
The cost of using \kvs is higher disaggregated-memory consumption (2$\times$ FUSEE's).

In summary, our contributions are the following:
\begin{myitemize}
    \item \sysname: the first replication protocol for disaggregated
      memory
      to offer
      strong consistency
        (linearizability), strong liveness (wait-freedom), and \readop{}s and \writeop{}s that complete in one \rt
        in the common case. \sysname need not run
        code at the memory nodes.
    \item \protocol: a new protocol that directly provides
         the above features if
         memory nodes can 
         provide atomic conditional updates to large objects in one \rt.
    \item \innout: a novel
        technique for the atomic and conditional update of 
        large disaggregated memory objects in one \rt
        with no compute at memory nodes.
    \item \kvs:
        a low-latency, strongly consistent
        and highly available disaggregated key-value store.
    \item A thorough evaluation of \kvs using YCSB,
        which finds much improved latency and availability against the state of the art.
\end{myitemize}

\kvs is open-source, available 
  at \url{https://github.com/LPD-EPFL/swarm-kv}.
The appendix has detailed correctness
  proofs and \rt complexity analyses.
\section{Background} \label{sec:background}

\subsection{Setting} \label{sec:background:dm} \label{sec:setting}

We consider a data center system where servers host applications that can access disaggregated memory.
This disaggregated memory is provided by a number of \emph{memory nodes}
  connected to
  a low-latency high-bandwidth interconnect.
Our goal is to replicate
  the data in disaggregated memory
  across a subset of memory nodes,
  through a protocol
  that application threads use
  to read and write data.
Our protocol, \sysname,
is agnostic to the disaggregation technology, as long as it satisfies
  three properties.
First,
the memory supports \readop{} and \writeop{} operations, though these
  need not be atomic (e.g., concurrent {\writeop}s may clobber
  each other, while a \readop concurrent with a \writeop may return partly written data).
Second,
  the memory supports a 64-bit atomic \casop (compare-and-swap).
Third,
  application threads can pipeline
  two update
  operations
  to be executed in order
  at the same memory node (i.e., if the second update is visible, so is the first), 
  and they execute in one \rt.
This setting can be realized with RDMA~\cite{pfister2001infiniband,beck2011roce-performance},
  but we believe it could also be realized with other technologies like CXL~\cite{intelcxl} in the future.
Although \sysname makes no synchrony assumption for replication, it requires eventual synchrony to recycle memory.

\subsection{Failure Assumptions}

The system is subject to crash failures that affect
  application threads and memory nodes. 
Any number of application threads can fail, but
  we assume a majority of memory nodes
  remains alive (e.g., a deployment
  with 3 or 5 memory nodes can tolerate 1 or 2
  failed nodes, respectively).
We do not consider Byzantine failures.
The network and disaggregated memory
  interconnect may also fail,
  but they eventually recover.
The system remains safe under 
  partitions, but application threads need to be able
  to reach a majority
  of memory nodes to make
  progress.
While the set of application threads can change over time, \sysname operations are
wait-free only if the number of concurrent application threads is bounded.
  
\subsection{Read-Write Replication and the ABD Protocol}
\label{sec:background:ar}
\label{sec:background:abd}

We are interested in strongly consistent replication providing
  linearizability~\cite{herlihy1990-linearizability}, which ensures that operations take effect
  atomically at a single point in time.
Linearizable replication schemes can be divided into two types.
State machine replication~\cite{lamport-clocks, lamport-smr}
  implements state machines with arbitrary operations, while 
  read-write replication~\cite{abd-emulation} implements a \emph{register} object
  with a \readop and a \writeop operation, such that
  \readop{}s return the value of the latest \writeop{}~\cite{attiyawelch}.
State machine replication tends to be more complex and less efficient than read-write
  replication because it requires
  solving consensus~\cite{lamport-paxos}
  which incurs more \rts (e.g., clients first contact a leader, then the leader runs a consensus protocol), and is not wait-free (it can be delayed for arbitrarily long in periods without synchrony).
Therefore, we focus on read-write replication, which is simpler and
  suffices for many use cases, such as key-value stores
  and storage systems~\cite{dutta2004howfast, burkhard2009efficiency, huang2020finegrained, tseng2023distributed}.

\StartLineAt{1}

\begin{lstlisting}[caption={ABD expressed using a max register},label={alg:abd}]
M = (ts: (i: 0, tid: @$\bot$@), v: @$\bot$@) // Max Register

def WRITE(v): // this block is not atomic
  fresh_ts = (M.READ().ts.i + 1, tid)
  M.WRITE((fresh_ts, v))

def READ():
  return M.READ().v
\end{lstlisting}

The canonical read-write replication protocol is ABD~\cite{abd-emulation,abd-mwmr},
  named after its creators Attiya, Bar-Noy, and Dolev.
Algorithm~\ref{alg:abd} shows ABD, expressed
 using a \emph{max register}.
A max register is an
  object that supports \readop{} and \writeop{} operations
  such that a
  \readop{} returns a value
  greater than or equal to the values of all
  operations that completed before it
  (see Appendix~\ref{app:wmr} for more details).\footnote{Our definition of a max register is weaker 
  than~\cite{maxregister}, but suffices for us.}
  
In ABD,
  each value written to the max register is augmented with a logical timestamp used to order it.
This timestamp
  comprises a
  thread id
  to break ties of otherwise
  identical timestamps from different writers.
To \writeop{}, a thread first picks a \emph{fresh} timestamp---one
  that is higher than the timestamps of all
  completed \writeop{}s---by
  reading the max register and adding 1 to the timestamp it finds.
Then, it writes its value alongside the fresh timestamp to the max register.
  To \readop{}, a thread reads the max register and
  returns the value it finds, ignoring the timestamp.
  
The max register itself is implemented from multiple
  max registers that may crash, assuming a majority remains alive. Briefly, to \writeop{} to the max register, we write to a majority of the underlying
 registers.
To \readop{}, we read a majority of the underlying registers,
  pick the largest value, ensure it is written to a majority
  of the underlying registers (which may require an additional \rt in unlucky cases), and return it. 
Appendix~\ref{app:wmr} provides the pseudocode
and a full proof of correctness of this reliable max register implementation.

\subsection{Challenges}

There are many challenges in
achieving our goal of replicating shared data in disaggregated memory with strong
  consistency, strong liveness, and low latency.

\paragraph{One-\rt.}
While ABD offers strong consistency and liveness, its \writeop{} operation
  incurs two \rts to disaggregated
  memory, which doubles the latency compared to a non-replicated system.
An ideal replication scheme would always take one \rt, but
  this is provably
  unachievable~\cite{dutta2004howfast}, so
  we settle for one \rt most of the time.
The ABD algorithm suggests that there is little opportunity to save a \rt: because writing the value to disaggregated memory is mandatory, we can eliminate only the first \rt, used to obtain a fresh timestamp.
We can eliminate this step by guessing a fresh timestamp, but doing so is tricky:
if the guessed timestamp is not fresh, the \writeop{} must be retried, which
  effectively writes the value with two different timestamps; this can cause
  \readop{}s to oscillate between two values, violating linearizability.
We address this challenge with \protocol, a novel replication protocol (Section~\ref{sec:replication}).

\paragraph{The limits of disaggregated memory.}
ABD is designed to replicate data in the memory of processes that can run arbitrarily logic.
Disaggregated memory, however, is limited in its compute capabilities, which makes it hard
  to implement ABD's max register.
The latter indeed
  requires a conditional update (i.e., only write if the
  provided value is larger than the current value) and atomic \readop{}s/\writeop{}s on large values
  (larger than the word size).
While one can extend the atomicity of disaggregated memory by 
  leveraging 64-bit \casop{}es,
  these incur high latency because of locking or pointer chasing~\cite{tsai2020clover}.
We address this challenge with \innout, a novel technique to conditionally and
  atomically manipulate large disaggregated memory buffers in one \rt
  (Section~\ref{sec:ino}).

\section{\protocol} \label{sec:replication}

\protocol is \sysname's core replication protocol.
Section~\ref{sec:replication:overview} gives an informal overview.
 Section~\ref{sec:replication:algorithms} gives its algorithms.
  Section~\ref{sec:replication:splitter} explains how to
  implement \protocol' new building block, the \emph{timestamp lock}.
A full proof of correctness is given in Appendices~\ref{app:timestamp} (timestamp lock) and \ref{app:safeguess} (\protocol).
  
\subsection{Overview} \label{sec:replication:overview}

\begin{figure}
  \centering
  \includegraphics[width=\columnwidth]{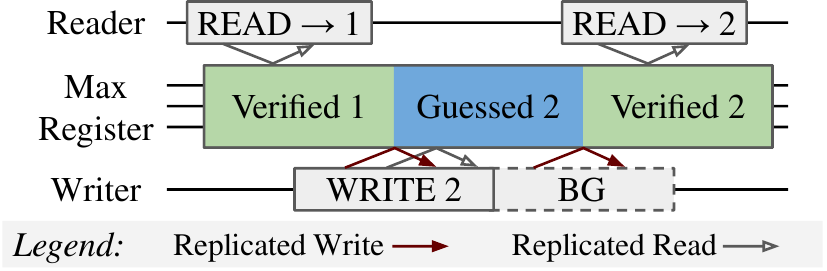}
  \caption{ Fast \readop{}s complete in one \rt to the replicas in disaggregated memory
  by finding a value tagged as verified.
  Fast \writeop{}s complete in one \rt by 
  writing their value with a guessed timestamp and
  confirming the freshness of the latter via a parallel \readop{}.
  Successful \writeop{}s are tagged as verified in the background.}
  \label{fig:sg-fast-path}
\end{figure}

  \protocol adopts a fast-slow path design
so that
\readop{}s and \writeop{}s complete in one \rt in the common case,
  when there are no failures or contention, and clocks are
  nearly synchronized
  (Figure~\ref{fig:sg-fast-path}).
  In other cases, \protocol runs a slow path with more \rts, but
  preserves
  wait-freedom.
  
Similarly to the ABD protocol (\S\ref{sec:background:abd}),
  \protocol orders \writeop{}s using timestamps.
A timestamp $t$ associated with a \writeop{} $W$ is \textit{fresh} if $t$ is greater than the timestamps of all \writeop{}s that completed before $W$ started; otherwise, $t$ is \textit{stale}.
ABD requires obtaining a fresh timestamp for each \writeop{}, which incurs an additional network \rt.
 \protocol avoids this overhead by guessing a timestamp
  that is fresh most of the time, but not always.
When the timestamp is stale, \protocol retries
  with a new fresh timestamp.
 
  While using stale timestamps might seem harmless,
merely requiring a retry,
  a major difficulty lies in diagnosing such
  timestamps as stale.
Sometimes, a thread cannot tell whether
  the timestamp it used was fresh.
To preserve safety in spite of stale timestamps,
  \protocol' slow path retries \writeop{}s with a different
  timestamp only after ensuring that no thread has/will
  ever read the value using the guessed timestamp.

Each \protocol-replicated object
 is built
  on top of two types of reliable wait-free objects:
  a \emph{max register} as in ABD,
  and a set of \textit{timestamp locks}.
The max register stores
  the latest written value alongside its timestamp
  and a flag that indicates whether this timestamp was verified or guessed.
The timestamp lock (one per writer) helps threads
  determine if 
  a value with
  a guessed timestamp
  can be 
  read or rewritten.
  
\subsection{Read and Write Algorithms} \label{sec:replication:algorithms}

Fast \protocol \writeop{}s require writers
  to efficiently guess fresh timestamps in the common case.
  Guessing is encapsulated in the \t{guessTs} function (line~\ref{alg:sg-write:guess-ts}), which returns a pair, the
  second element of which is the writer's thread id.
\protocol is oblivious to \t{guessTs}' implementation but
  mandates it to be strictly
  monotonic at a given thread.
Assuming reasonable clock synchrony among the application threads, a good timestamp can be guessed using a core's local clock.

\StartLineAt{1}

\begin{lstlisting}[caption={\protocol{}' $\writeop{}$ Pseudocode},label={alg:sg-write}]
M = ((0, @$\bot$@), VERIFIED, @$\bot$@) // Max Register@\label{alg:sg-write:M}@
TSL[tid] = {} // Timestamp Lock @\label{alg:sg-write:S}@

def WRITE(v):
  w = (guessTs(), GUESSED, v) @\label{alg:sg-write:guess-ts}@
  i@$ $@n parallel {m = M.READ(), M.WRITE(w)} @\label{alg:sg-write:parallel}@
  if m <= w: // Fast path (fresh timestamp) @\label{alg:sg-write:fp}@
    i@$ $@n bg: M.WRITE(w with VERIFIED) // Spdup reads @\label{alg:sg-write:bg-write}@
  else: // Slow path (potentially stale timestamp) @\label{alg:sg-write:sp}@
    if TSL[tid].TRYLOCK(w.ts, WRITE): @\label{alg:sg-write:split}@
      M.WRITE(((m.ts.i+1, tid), VERIFIED, v)) @\label{alg:sg-write:write}@@\label{alg:sg-write:verified}@
\end{lstlisting}

Algorithm~\ref{alg:sg-write} shows the logic of \protocol' replicated \writeop{}s.
The writer synchronizes with the other threads
  via two reliable shared objects:
  a max register \t{M} (line~\ref{alg:sg-write:M}) identical to ABD's (\S\ref{sec:background:abd}),
  and a timestamp lock \t{TSL[tid]} (line~\ref{alg:sg-write:S}) whose details we explain later.
  The writer starts by preparing the
  3-tuple it will
  write to \t{M}
  (line~\ref{alg:sg-write:guess-ts}).
This 3-tuple is composed of
  (1) the timestamp returned by \t{guessTs}
  thanks to which the writer will try to 
  overwrite the max register (tuples are ordered lexicographically) and 
  linearize its \writeop{},
  (2) a \t{GUESSED} flag that indicates that the writer is not sure
  about the freshness of the timestamp,
  and (3) the value \t{v} it wants to write.
Then, in parallel, the writer reads \t{M}
  while writing its 3-tuple to it
  (line~\ref{alg:sg-write:parallel}).
Although reading \t{M} while writing to it
  (without ordering guarantees) might seem futile,
  it actually allows the writer to further
  establish whether the timestamp it guessed
  was fresh or not.
More precisely, if the 3-tuple it reads is
  less than or equal to the one it wrote
  (line~\ref{alg:sg-write:fp}), then it
  knows that the timestamp it guessed was fresh
  and that its \writeop{} was successfully linearized.
In such cases, \t{WRITE} returns immediately after
  having scheduled a background task to set
  the flag in the 3-tuple to \t{VERIFIED} (which is greater than \t{GUESSED} with respect to the ordering function used by the max register)
  to let readers know
  that its associated timestamp is fresh and definitive
  (line~\ref{alg:sg-write:bg-write}).
This scenario lets \t{WRITE} complete
  in a single access latency to \t{M} in the common case.

Otherwise, i.e., if the writer reads a tuple
  greater than the one it was trying to
  write (line~\ref{alg:sg-write:sp}),
  it cannot assess the freshness of the timestamp it used:
  it may have been stale, but it may also
  have been fresh but overwritten before the 
  writer could read it.
  In the first case, the \writeop{} should be
  re-executed with a fresh timestamp.
In the second, the written value
  could have been read by another thread
  and re-executing the \writeop{} with another
  timestamp would result in \t{v} being written twice,
  breaking consistency.

The writer can safely
rewrite its value with another timestamp using a \textit{timestamp lock}, a novel wait-free mechanism. Briefly,
  this lock lets the writer stop concurrent readers from observing the guessed timestamp. Symmetrically, readers can use it to prevent a rewrite if they deem the guess fresh.
  Similarly to a
  readers-writer lock,
  a timestamp lock
  can either be held
  by multiple readers,
  or by the writer, but never both.\footnote{Our
  formal specification allows
  multiple writers to hold the same lock, but we
  ensure writer exclusion by having one lock per
  writer.}
  Differently,
  (1) timestamp locks are never unlocked 
  (but can be relocked at higher timestamps);
  (2) 
  a reader and a writer trying to lock concurrently may both fail; and (3) locking may also fail if a
  higher timestamp was used.
We formally define and construct timestamp locks in Section~\ref{sec:replication:splitter}. 

\protocol uses timestamp locks as follows. A writer whose guessed timestamp may have been stale tries 
  to
  lock concurrent readers out
  (line~\ref{alg:sg-write:split}). If it fails to, this means that
  a reader also tried to lock the guessed timestamp, which implies that this reader deemed the timestamp fresh, so \t{WRITE} can safely return.
  If the writer manages to lock readers out,
  none
  could have returned the value at the guessed timestamp,
  and any reader
  attempting to do so in the future
  would not be able to lock the timestamp, causing it to retry its \readop{}.
  The writer can thus safely re-execute
its \writeop{} with a 
  timestamp higher than \t{m.ts}---so that it is
  fresh---and
  mark
  it as \t{VERIFIED} (line \ref{alg:sg-write:write}).
\t{WRITE} returns once the \writeop{} to \t{M} is over.
As \t{WRITE} has a unidirectional control flow
  and uses only wait-free objects, it is wait-free.

\ifacm
\ContinueLineNumber

\begin{lstlisting}[caption={\protocol{}' \readop Pseudocode},label={alg:sg-read}]
// Shares M and TSL with writers @\label{alg:sg-read:M}@@\label{alg:sg-read:S}@

def READ():
  seen: dict<ThreadId, MValue> = {} @\label{alg:sg-read:dict}@
  while True: @\label{alg:sg-read:loop}@
    m = M.READ() @\label{alg:sg-read:read}@
    if m is VERIFIED: return m.v // Fast path @\label{alg:sg-read:fp}@
    if m in seen.values: // Fresh timestamp @\label{alg:sg-read:fresh}@
      if TSL[m.ts.tid].TRYLOCK(m.ts, READ):@\label{alg:sg-read:split}@
        i@$ $@n bg: M.WRITE(m with VERIFIED) // Spdup rds @\label{alg:sg-read:write}@
        return m.v @\label{alg:sg-read:return-1}@
    elif m.ts.tid in seen.keys: // Wait-free path@\label{alg:sg-read:in-seen}@
      return seen[m.ts.tid].v @\label{alg:sg-read:return-2}@
    seen[m.ts.tid] = m @\label{alg:sg-read:update-seen}@
\end{lstlisting}
\fi

Algorithm~\ref{alg:sg-read} shows the logic of \protocol' \readop{}s.
Readers use the same max register
  \t{M} and timestamp locks \t{TSL[*]} as writers. Informally, \protocol' \readop{}s consist
  in iteratively reading tuples from \t{M}
  (lines~\ref{alg:sg-read:loop}--\ref{alg:sg-read:read})
  until one is deemed valid and its value is returned.
If a tuple read from \t{M} is marked
  as \t{VERIFIED}, its value is immediately
  returned (line~\ref{alg:sg-read:fp}).
This lets \readop{}s usually complete
  in a single access latency to \t{M}.

Readers, however, cannot wait to find
  a \t{VERIFIED} tuple
  as this would violate both fault tolerance
  and wait-freedom.
They must instead identify, out of all
  the \t{GUESSED} tuples they read from \t{M},
  one with a value that is
  is safe to return.
The condition for a tuple read from \t{M} to be
  safe is two-fold.
First, its timestamp must
  have been fresh (i.e., the writer must
  have used a timestamp higher than all
  previously completed \writeop{}s).
Second, its writer should not re-execute
  its \writeop{} with another timestamp.

\ifacm\else

\fi

\begin{figure}
  \centering
  \includegraphics[width=\columnwidth]{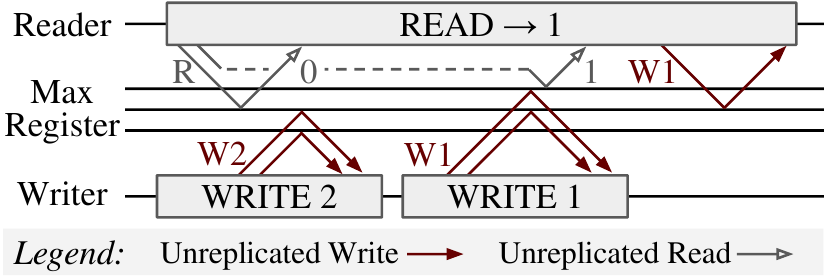}
  \caption{A max register \readop{} reports the wrong maximum out of \writeop{}s concurrent to it.
  Mismatching majorities lead to a \readop{} of 1, despite 1 being written after 2.
  However, no subsequent \readop{} can return 1, as \writeop{} 2 will be over.
  }
  \label{fig:weak-max-register}
\end{figure}

A reader is sure of the freshness of a tuple's timestamp
  after reading it twice from the max register.
Reading it once is not enough, because a \readop{} to a max register can report the wrong maximum out of \writeop{}s concurrent to it
  (Figure~\ref{fig:weak-max-register}).
As the second \readop{} is sequential to the first, it disallows such behavior and confirms the freshness of the timestamp.
  Concretely,
  the reader tracks the
  tuples it read in a dictionary (line~\ref{alg:sg-read:dict})
  that it updates at the end of each iteration
  (line~\ref{alg:sg-read:update-seen}),
  and only considers a tuple as having had a fresh
  timestamp if it has seen it in a previous iteration
  (line~\ref{alg:sg-read:fresh}).

After finding a tuple with a fresh timestamp,
  the reader ensures that its writer
  did not and will never retry its \writeop{} with another
  timestamp by 
  trying to lock the guessed timestamp
  in read mode (line~\ref{alg:sg-read:split}).
  If this succeeds, the reader
  deems the tuple valid.
  The reader thus marks the timestamp as \t{VERIFIED} in the background
  to speed up future \readop{}s (line~\ref{alg:sg-read:write}),
  and returns its associated value (line~\ref{alg:sg-read:return-1}).

In case a reader is never able to lock any timestamp, it can still return after seeing two different
  tuples from the same writer (lines~\ref{alg:sg-read:in-seen}--\ref{alg:sg-read:return-2}).
Indeed, since this writer
  started a second \writeop{},
  the first must have completed (either during the \readop{} or just
  before it),
  so its value is safe to return.
  
Thanks to the assumption of a bounded number of writers, this algorithm
  is forced to terminate in a bounded number of steps.
The key observation is that,
  if a reader fails to 
  lock a timestamp in read mode,
  then its writer must have seen
  an even higher timestamp,
  which ensures that the reader will discover
  a new tuple in the next iteration.
Given that a reader can only loop as long as it does
  not see two different tuples from the same writer
  and that there is a bounded number of writers,
  the reader will return a value in at most
  $2\times\#writers+1$ iterations.
Moreover, as all accessed objects are wait-free,
  so is \t{READ}.

\subsection{Building Timestamp Locks} \label{sec:replication:splitter}

 Formally, a timestamp lock has a single \t{TRYLOCK(ts, READ or WRITE)$\rightarrow$bool} method.
 This method has two properties:
(1) \t{TRYLOCK(ts,m)} returns \t{True} if
  there are no preceding or concurrent calls
  to \t{TRYLOCK(ts,$\neg$m)} (with $\neg \t{READ}{=}\t{WRITE}$)
  or to \t{TRYLOCK(ts',$\star$)} with \t{ts'}>\t{ts},
and (2) it is impossible for \t{TRYLOCK(ts,READ)} and \t{TRYLOCK(ts,WRITE)} to both return \t{True}.
Timestamp locks are similar to conflict detectors~\cite{aspens2014tight}
  and splitters~\cite{moir1995splitter}, but are more space efficient since a single timestamp lock
  handles many timestamps.
  
  A wait-free reliable timestamp lock can be built
  using a set of fallible CAS (compare-and-swap) objects,
  the majority of which is assumed not to fail (we will deploy a
  fallible CAS object on each memory node). These CAS objects
  provide a single
  \t{CAS(x,y)} 
  method,
  which atomically sets the object's
  value to \t{y} if it was previously
  equal to \t{x}, and returns the
  previously stored value.

Intuitively, both lock modes
  race to take control of a majority of 
  the CAS objects, and
  fail if they hear about
  the other in the process.
The key observation here is that
  it is impossible for both lock modes
  to be written at a majority,
  which ensures that at most one side
  can return \t{True}.
  
\StartLineAt{1}

\begin{lstlisting}[float, caption={Timestamp Lock Pseudocode},label={alg:split}]
CASes = {(@$\bot$@, @$\bot$@), ...} // 2f+1 CAS Objects @\label{alg:split:CASs}@

def TRYLOCK(ts, mode: READ or WRITE):
  read: dict<CAS, CasValue> = {(@$\bot$@, @$\bot$@), ...}
  async for c in CASes: @\label{alg:split:async}@
    while read[c].0 < ts: @\label{alg:split:while}@
      expected = read[c]
      read[c] = c.CAS(expected, (ts, mode))
      if read[c] == expected: break@\label{alg:split:cas}@
  wai@$ $@t fo@$ $@r a majority to complete @\label{alg:split:wait}@
  if any c st read[c].0 > ts:@\label{alg:split:any-greater}@ return False @\label{alg:split:return-1}@
  if any c st read[c] == (ts, @$\neg$@mode):@\label{alg:split:any-other-side}@ return False@\label{alg:split:return-2}@
  return True @\label{alg:split:return-3}@
\end{lstlisting} 

Algorithm~\ref{alg:split} details
  the pseudocode of \t{TRYLOCK}.
The locker
  first CASes the tuple
  \t{(ts,mode)} to each individual CAS
  object until a majority of them
  is associated to a timestamp greater than or
  equal to \t{ts} (lines~\ref{alg:split:async}--\ref{alg:split:wait}).
Line~\ref{alg:split:while} ensures that
  no CAS object ever changes its
  opinion on the lock mode
  of a given timestamp, which ensures
  the safety of the timestamp lock.
If any of the CAS objects
  contains a timestamp higher than
  \t{ts}, \t{TRYLOCK} returns \t{False}
  as \t{TRYLOCK(ts',$\star$)} was called with
  \t{ts'{$>$}ts}
  (line~\ref{alg:split:any-greater}). \t{TRYLOCK} also returns \t{False} if any
  of the CAS objects contains \t{ts}
  with the opposite lock mode (line~\ref{alg:split:any-other-side}). Otherwise, \t{TRYLOCK} returns \t{True} (line~\ref{alg:split:return-3}).
\section{\innout} \label{sec:ino}

Section~\ref{sec:background:abd} explains how
  to build
  ABD's and \protocol' max registers
  using a set of failure-prone max registers.
  Implementing these primitive max registers
  over message passing with compute-capable replicas
  is simple, but doing so efficiently 
  over disaggregated memory is challenging.
The difficulty lies in its limited compute capability and
  lack of atomic \readop{}s and \writeop{}s of large values, that is,
  values larger than the word size
  (8\,B in RDMA~\cite{rdma-manual} and CXLv3~\cite{cxl-atomicity}):
  such operations can co-mingle their execution if executed concurrently.
For example, a \readop{} may return only half the data of a concurrent \writeop.
  This section explains how to overcome these
  issues using a new technique we call \emph{\innout}.

\subsection{Standard Approaches}

There are two standard ways of implementing atomic
  \readop{}s and \writeop{}s of large values.
  
\paragraph{In-place updates with locks.}
The first technique consists in
  employing readers–writer locks
  to ensure that no writer modifies
  the data while it is being read.
This technique has three drawbacks:
(1) it does not tolerate thread
  failures, (2) it requires additional
  \rts to take the lock,
and (3) it is not wait-free.

\paragraph{Out-of-place updates with atomic pointer swing.}
The second technique consists in
  writing new data
  to a separate buffer,
  and then atomically
  updating an 8\,B pointer to this buffer.
This technique is wait-free but
  has two performance drawbacks:
(1) \writeop{}s must completely
  write the buffer before updating the pointer (which, depending on ordering guarantees, might require an additional \rt),
  and (2) \readop{}s require an extra \rt
  to chase the pointer.

\subsection{Overview}

\innout combines both in-place and out-of-place updates.
It uses
in-place updates without any locks
  so that \readop{}s can complete in one \rt
  (no need to chase pointers).
However, \readop{}s may sometimes find garbled data.
In those cases, \innout relies on out-of-place updates
  so that these \readop{}s can return correct data.
Importantly, in-place data is validated via a hash that includes a pointer to out-of-place data, so that the former is deemed valid only if it matches the latter.
To update the out-of-place buffer and its pointer
  in a single \rt, \innout leverages the ordering
  guarantees of disaggregated memory (\S\ref{sec:background}).

\subsection{Read and Write Algorithms}

\begin{figure}
  \centering
  \includegraphics[width=\columnwidth]{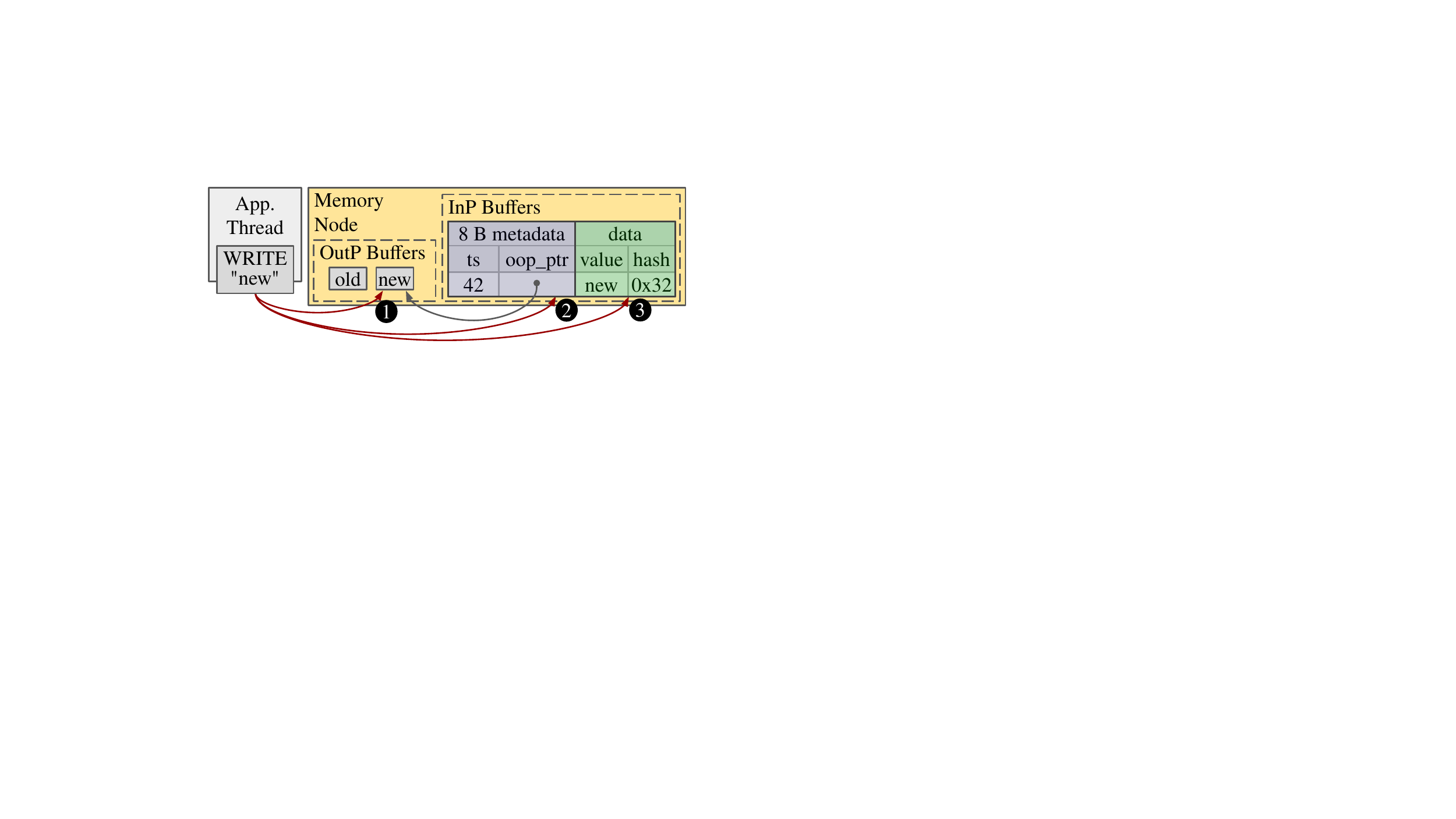}
  \caption{Outline
  of an \innout \writeop.
  In one \rt, the thread
  (1) writes to an out-of-place buffer,
  (2) updates the metadata to point to it,
  and (3) updates the in-place data.
  }
  \label{fig:innout-write}
\end{figure}

Figure~\ref{fig:innout-write} shows 
  \innout's memory layout for
  a given max register.
This layout is divided in three segments located
  on the same memory node:
(1) private memory buffers used
  to safely store data out of place,
(2) 8\,B of metadata that
  stores the data's timestamp and
  a pointer to its out-of-place buffer,
and (3) a buffer for in-place data
  augmented with a hash.
For efficiency, the in-place data
  buffer is located next to the
  8\,B metadata so that both can
  be read in a single operation.

\StartLineAt{1}

\begin{lstlisting}[caption={\innout{}'s Max \writeop{} Pseudocode},label={alg:ino-write}]
def WRITE(v): @\label{alg:ino-write:signature}@
  ts = timestamp(v) @\label{alg:ino-write:timestamp}@
  oop_ptr = out_of_place_buffers.allocate() @\label{alg:ino-write:allocate}@
  pipelined_disaggregated_memory_operations([@\label{alg:ino-write:rdma-seq-0}@
    WRITE(oop_ptr, v),@\label{alg:ino-write:rdma-seq-1}@
    MAX(&metadata, (ts, oop_ptr))])@\label{alg:ino-write:rdma-seq-2}@
  i@$ $@n bg: WRITE(&data, (v, hash((ts, oop_ptr), v))@\label{alg:ino-write:rdma-seq-3}@
  wait for MAX to complete @\label{alg:ino-write:wait}@
\end{lstlisting}

Algorithm~\ref{alg:ino-write} shows
  the pseudocode for \innout \writeop{}s.
As it specifically implements ABD's and \protocol' large
  \emph{max} registers,
  it assumes that the written values 
  include a small timestamp that
  orders them (\protocol' \t{GUESSED} or \t{VERIFIED} flag is encoded in its least significant bit).
A \writeop starts by extracting
  said timestamp using a
  \t{timestamp} function (line~\ref{alg:ino-write:timestamp}).
Then, the writer allocates a
  new out-of-place buffer (line~\ref{alg:ino-write:allocate}).
Writers pre-allocate large memory chunks
  so that this buffer allocation does not require
  accessing disaggregated memory.
  The writer then issues a
  series of two memory operations
  (lines~\ref{alg:ino-write:rdma-seq-0}--\ref{alg:ino-write:rdma-seq-2})
  that get executed in order thanks
  to hardware support for FIFO pipelining (\S\ref{sec:background:dm}).
The first operation fills the just-allocated
  out-of-place buffer with the
  value to be written (line~\ref{alg:ino-write:rdma-seq-1}).
The second operation atomically
  sets the 8\,B metadata to a pair made of
  the timestamp and the out-of-place pointer
  if this is higher than the tuple it
  previously stored (line~\ref{alg:ino-write:rdma-seq-2}).
The writer then schedules a background update of the in-place buffer with the written value and its hash (line~\ref{alg:ino-write:rdma-seq-3}).
The \writeop is deemed over when
  the \maxop operation completes (line~\ref{alg:ino-write:wait}).

\ContinueLineNumber

\begin{lstlisting}[float,caption={\innout{}'s \readop{} Pseudocode},label={alg:ino-read}]
def READ():
  (metadata, data) = READ(&metadata_data) @\label{alg:ino-read:read}@
  if hash(metadata, data.value) == data.hash:@\label{alg:ino-read:crc}@
    return data.value @\label{alg:ino-read:return}@
  return READ(metadata.oop_ptr) // fall back @\label{alg:ino-read:fallback}@
\end{lstlisting}

Algorithm~\ref{alg:ino-read} shows the pseudocode for \readop{}s.
The thread first reads
  the 8\,B metadata and the
  in-place data (line~\ref{alg:ino-read:read}).
If the hash of the in-place data is correct (i.e., the in-place data matches the out-of-place data),
  its value is returned
  (lines~\ref{alg:ino-read:crc}--\ref{alg:ino-read:return}).
Otherwise, the value is read from the out-of-place
  buffer using the pointer in
  the metadata (line~\ref{alg:ino-read:fallback}).

\subsection{The MAX Operation} \label{sec:ino:max}

\innout's \t{WRITE} procedure (Algorithm~\ref{alg:ino-write})
  uses a \maxop{}
  read-modify-write atomic operation.
While this operation is conceptually similar to
  a \casop (compare-and-swap),
  it is typically not supported in hardware.
We thus emulate this operation
  using a \casop (Algorithm~\ref{alg:max}).

\StartLineAt{1}

\begin{lstlisting}[caption={\casop{}-Based \maxop{} Replacement},label={alg:max}]
def MAX(dst, v):
  prev = @$\bot$@
  while prev < v:
    expected = prev
    prev = CAS(dst, expected, v)
    if prev == expected: break
\end{lstlisting}

This emulation has two drawbacks.
First, while a native \maxop operation would always
  complete in one \rt, this implementation
  can retry a (bounded) number of times.
Second, this replacement needs a \rt to discover
  the previous value of the 8\,B buffer.
In practice, as \writeop{}s often follow \readop{}s,
  the value of the 8\,B buffer
  can be cached for \maxop to complete
  in one \rt.
Still, \writeop{}s to frequently modified keys can suffer
  from contention.

We reduce contention by replacing the 8\,B buffer with an array of
  8\,B buffers, each updated by a subset of the writers.
To find the maximum, readers scan the array and return the
  largest entry.
While the scan is not atomic,
this technique 
  provides the required semantics of a max register: \readop{}s return a value greater than or equal to all operations that completed before them.
Moreover, by using one 8\,B buffer per writer,
  \maxop{}es
  can be
  made to complete in one \rt.
We evaluate the benefits of this technique (\S\ref{sec:eval:micro}).

\subsection{Recycling Memory}\label{sec:ino:rm}

Each \innout \writeop{} allocates
  a new out-of-place buffer (Algorithm~\ref{alg:ino-write}).
These buffers eventually need to be recycled with care,
  lest readers can obtain invalid data due to concurrent (unexpected) \writeop{}s.
One way to recycle is to use wait-free hazard pointers~\cite{wfhp}:
  a reader advertises its wish to read
  data before accessing its out-of-place
  buffer, so that if a writer were to recycle
  its memory, it would first satisfy
  the reader's wish by handing it a valid
  buffer, and only recycle unrequested buffers.
This technique preserves wait-freedom,
  but requires an extra \rt to 
  advertise \readop{}s.
We use a different approach, which does not require the extra \rt:
  to recycle, a thread first asks
  all readers to stop accessing the buffers
  to be freed.
In theory, this approach is not wait-free
  because a writer that runs out of memory must wait for 
  slow readers to respond to its recycling request.
In practice, this is a sensible trade-off as
  real-life systems are partially synchronous
  and use membership services~\cite{hunt2010zookeeper, ukharon},
  so they can
  recycle memory in background tasks with bounded delays,
  so that the read-write protocol never runs out of memory
  and thereby remains wait-free.

\section{\kvs}

We use \sysname
  to build \kvs, the first fault-tolerant disaggregated
  key-value store to provide single-\rt
  \getop{}s, \updateop{}s, \insertop{}s and \deleteop{}s
  in the common case.
  Section~\ref{sec:kvs:setting} establishes the setting, Section~\ref{sec:kvs:overview} gives 
 the outline
of \kvs,
  Section~\ref{sec:kvs:protocols} explains the protocol
  for each operation,
  and Section~\ref{sec:kvs:rm} discusses memory recycling.

\subsection{Setting}\label{sec:kvs:setting}

 SWARM-KV clients run on traditional data-center servers that directly \insertop{}, \updateop{}, \getop{} and \deleteop{} key-value pairs replicated in disaggregated memory using the SWARM-KV library.
There are no intermediary servers between the clients and the data
 in disaggregated memory, which minimizes the latency.
 Any number of SWARM-KV clients can crash, along with any set of memory nodes, provided each key has a majority of its replicas accessible.
 SWARM-KV also requires a fast index as well as a membership service, which can run on traditional servers and are fault-tolerant.
While SWARM-KV is safe under partitions, clients might temporarily lose liveness if they are unable to contact a majority of disaggregated replicas, the index, or the membership service.

\subsection{Overview}\label{sec:kvs:overview}

\begin{figure}
  \centering
  \includegraphics[width=\columnwidth]{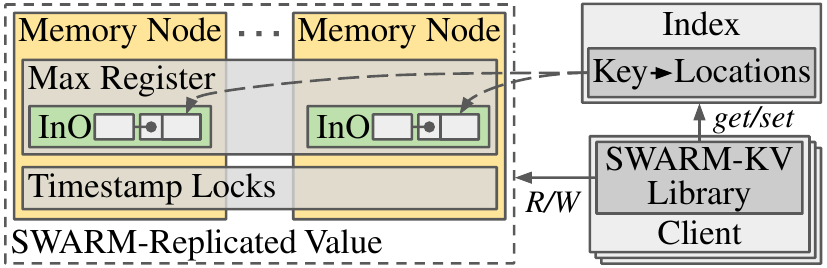}
  \caption{Architecture of \kvs with a single key.
  Clients use 
  \sysname \readop{}s and \writeop{}s to directly access
  the value replicated in disaggregated memory.
  The replicas that form \protocol' max register are spread across different memory nodes and implemented via \innout max registers.
  The location of the replicas is stored in a reliable index.
  }
  \label{fig:swarm-kv-arch}
\end{figure}

Figure~\ref{fig:swarm-kv-arch} shows \kvs's architecture.
  \kvs's clients
  directly access
  key-value pairs replicated over a set of
  memory nodes using \sysname.
Each replica is a max register implemented via \innout.
  An index tracks the locations of keys' replicas (i.e., their memory nodes and addresses on them).
\kvs is oblivious to the choice of index,
  as long as it is reliable and allows clients
  to set and get the replicas
  associated to a key in a
  single \rt in the common case.
When a client accesses a key-value pair for the first time,
  it finds the location of its replicas via the index
  and caches it for subsequent operations.
 This location data is small
  ($\approx$24\,B per key),
  and is the only significant source of memory consumption for clients,
  so clients can
  cache tens of millions of keys.
  
\subsection{Protocols}\label{sec:kvs:protocols}

We now give the protocols for
  {\insertop}~(\S\ref{sec:kvs:protocols:insert}),
  {\deleteop}~(\S\ref{sec:kvs:protocols:delete}),
  {\updateop}~(\S\ref{sec:kvs:protocols:update}),
  and {\getop}~(\S\ref{sec:kvs:protocols:get})
  \kvs operations.

\subsubsection{\insertop{}s.}\label{sec:kvs:protocols:insert}
 The client executes two things in parallel: it replicates the value over
  a set of memory nodes, and
  it inserts the replicas' location in the index.
  More precisely, the client first
 picks a set of memory nodes
  and allocates on each an \innout max-register replica
  (a buffer to fit \innout's 8\,B metadata and in-place data, \S\ref{sec:ino}).
The client then uses a \sysname \writeop{} to replicate
  the inserted value over the chosen replicas.
Importantly, upon the replicas' allocation,
  their \innout 8\,B metadata is cleared
  to indicate that they are empty.
Clients pre-allocate cleared buffers
  so that this step completes in one \rt.

In parallel to this \writeop{}, the client
  asks the index to map the inserted key to the
  chosen replicas.
  If a mapping to replicas marked for deletion already exists (\S\ref{sec:kvs:protocols:delete}), it is overwritten. Otherwise, if a mapping to writable replicas already exists, the client recycles the \innout buffers it just allocated, and the \insertop{} request turns into an \updateop{} (\S\ref{sec:kvs:protocols:update}) that uses the existing replicas.
 The location of the replicas is then
  cached so that future \getop{} and \updateop{} operations on this key
  can bypass the index.
  
The \insertop is over once both tasks complete, which parallelization allows for in one \rt in the common case.
It is fine for the index
  insertion to finish before the \sysname \writeop{},
  as concurrent readers would find
  empty \innout replicas, indicating that the key does not exist.
Similarly, client failure may leave an index entry pointing to empty replicas,
  but this does not affect correctness.

\subsubsection{\deleteop{}s.}\label{sec:kvs:protocols:delete}
The client consults its
cache to find the replicas of a key; if not found, it
  consults the index.
Then, it runs a special \sysname
  \writeop{} that uses the max timestamp
  (i.e., all bits set) to ensure that it
  cannot be overwritten and that future
  \sysname \readop{}s and \writeop{}s will fail until the deleted key is re-inserted.
This way,
  the clients that may have cached
  replicas for the deleted key will see it
  deleted.
  Once the special \sysname
  \writeop is over,
the \deleteop completes and the client simply schedules a background task to unmap the just-written replicas from the key.

\subsubsection{\updateop{}s.}\label{sec:kvs:protocols:update}
The client finds the key's replicas
  using its cache or the index.
Then, it issues a \sysname
  \writeop{} to replicate its value
  and return upon its successful completion.
If the key is not indexed, the \updateop{} fails.
If the replicas were cached
  but the \writeop{} fails due to
  a previous \deleteop{},
  the client flushes its cache,
  deletes
  the old mapping in the index (in case the deleter failed),
  and retries the \updateop{}.
  
\subsubsection{\getop{}s.}\label{sec:kvs:protocols:get}
The client locates the key's replicas
  and then issues a \sysname
  \readop{}.
If the key is not (fully) inserted,
  the \getop fails.
Similarly to \updateop{}s, if the \readop{}
  fails due to a \deleteop{},
  the client flushes its cache
  and retries the \getop{}.

\subsection{Recycling Memory}\label{sec:kvs:rm}

  In the background, clients check if the in-place
  buffers they allocated are still indexed, and if
  their out-of-place buffers are still accessible.
To help with this phase,
  clients who delete or update a key inform its
  allocators that their buffers should be recycled.
Before a client recycles a buffer,
  it ensures other clients no longer
  access it by asking them to remove the
  key from their index cache and to stop accessing
  to-be-recycled out-of-place buffers.
Failed clients may not reply to these
  requests.
We use uKharon~\cite{ukharon}, a fast membership manager,
  to monitor the health of clients and instruct
  memory nodes to disconnect from
  suspected clients 
  so they do not access freed memory.
Some memory nodes may be inaccessible and not do so,
  but, for each key, a majority will be able
  to recycle memory,
  which is enough to ensure
  the liveness of \kvs.
\section{Implementation}\label{sec:implementation}

Our implementation of \kvs
  targets RDMA-based disaggregated memory
  using uKharon's framework~\cite{ukharon},
  and
  consists of 3,151 lines of C++17 (CLOC~\cite{cloc}).
It uses xxHash3~\cite{xxhash} for \innout's hashes
and FUSEE's index~\cite{fusee}
  that we modify to provide strong consistency.
Our clients guess timestamps via a loosely synchronized TSC-based
  clock~\cite{tsc} that they re-synchronize every time they
  guess a stale timestamp.
We did not implement memory recycling (\S\ref{sec:ino:rm} and \S\ref{sec:kvs:rm}).

To optimize bandwidth, memory, and latency,
  our implementation slightly diverges
  from the pseudocode previously given.
First, \protocol optimistically tries to replicates its
  \writeop{}s to a mere majority of the memory nodes
  determined by hashing the keys to spread the load.
If one of the memory nodes does not respond immediately,
  \protocol then tries to replicate to all replicas.
Similarly, \protocol \readop{}s optimistically read from
  a mere majority of the memory nodes.
Second, \innout's in-place data is only stored at one of the
  replicas---also determined by hashing of the key---and
  only when marking a \protocol tuple as verified (\S\ref{sec:replication:algorithms}).
\section{Evaluation} \label{sec:evaluation}

We evaluate \sysname through \kvs with the goal of answering the following questions:

\begin{myitemize}
  \item What is \kvs's latency under YCSB (\S\ref{sec:eval:latency})?
  \item How does throughput impact latency (\S\ref{sec:eval:lat-tput})?
  \item How does \kvs scale with clients (\S\ref{sec:eval:scalability})?
  \item How do different value sizes affect \kvs (\S\ref{sec:eval:value-size})?
  \item What is the impact of the replication factor (\S\ref{sec:eval:factor})?
  \item What is \kvs's resource consumption (\S\ref{sec:eval:res-cons})?
  \item What is the impact of failures on availability (\S\ref{sec:eval:knob})?
  \item What is the impact of extreme contention (\S\ref{sec:eval:contention})?
  \item How does our \maxop substitute handle contention (\S\ref{sec:eval:micro})?
\end{myitemize}

\begin{table}[b]
  \setlength{\tabcolsep}{2pt} \caption{Configuration details of machines.}
	\label{table:hwspecs}
\small
 \centering
	\begin{tabular}{cl} 	\toprule
 \multicolumn{2}{c}{\it 4 Client Servers} \\
\textbf{CPU}		& 2$\times$ 8c/16t Intel Xeon Gold 6244 @ 3.60\,GHz \\
\textbf{NIC}		& Mellanox CX-6 MT28908 \\
\midrule
\multicolumn{2}{c}{\it 4 Memory Nodes} \\
\textbf{Memory}		& 4$\times$ SK Hynix 32GB DDR4-2400 RDIMM @ 2133 MT/s \\
 \textbf{NIC}		& Mellanox CX-4 MT27700 \\
\midrule
\textbf{Switch}		& MSB7700 EDR 100\,Gbps \\
\textbf{Software} & Linux 5.15.0-102-generic / Mlnx OFED 5.3-1.0.0.1 \\
	 \bottomrule
	\end{tabular}
\end{table}

\paragraph{Testbed.}
 Our testbed is a cluster with 4 servers and 4 memory nodes configured per 
  Table~\ref{table:hwspecs}.
The dual-socket machines have an RDMA NIC attached to the first socket.
Our experiments execute on cores of the first socket using local NUMA memory.
We measure time using TSC~\cite{tsc} via \t{clock\_gettime} with the \t{CLOCK\_MONOTONIC} parameter. All machines are connected to a single switch, but we expect
 the latency benefits of \kvs to increase in settings
 with more hops.
 
\paragraph{Baselines.}
We compare \kvs against three baselines.
The first is an unreplicated disaggregated key-value store
  that provides no consistency under concurrent operations, which we call \emph{RAW}.
 Such a system is not useful in practice,
  but it is useful to establish a lower bound on latency.
  
  The second baseline is a disaggregated key-value store replicated via ABD, we call \emph{DM-ABD},
  that supports concurrency via out-of-place updates.
It represents a good engineering solution using
  known techniques. DM-ABD's \getop{}s and \updateop{}s commonly finish in two \rts:
  while \getop{}s require an extra indirection,
  \updateop{}s hide latency by writing out-of-place data in parallel to finding a
  fresh timestamp.

The third baseline is FUSEE, a state-of-the-art disaggregated
  key-value store that uses a synchronous replication scheme. FUSEE uses out-of-place updates that take at least four \rts.
FUSEE caches \updateop{}s' location
  so that \getop{}s run in one \rt if they
  re-access a key before it is updated, but take
  two \rts otherwise. 
  Due to FUSEE's stronger synchrony assumptions and higher latency, we limit the comparison with it to end-to-end latency (\S\ref{sec:eval:latency}).
  
 \begin{table}[]
\caption{Number of \rts for \getop{}s and \updateop{}s.}
\begin{tabular}{lccccc}
  & \multicolumn{2}{c}{Common} & \multicolumn{2}{c}{99th percentile} \\ & \getop & \updateop & \getop & \updateop \\ \hline RAW & 1 & 1 & 1 & 1 \\ \kvs & 1 & 1 & 1 & 1 \\ DM-ABD & 2 & 2 & 2 & 2 \\ FUSEE & 1--2 & 4 & \textcolor{purple}{\textbf{2}} & \textcolor{purple}{\textbf{5}} \\ \end{tabular}
\label{table:rtts}
\end{table}

Table~\ref{table:rtts} 
  shows
the number of \rts for
  \getop{}s and \updateop{}s of all systems, for
  both the common case and the 99th percentile, as observed under a standard workload (\S\ref{sec:eval:latency}).
  In all systems, operations are so fast that same-key contention is rare.
Due to their use of a synchronous index, all 
 systems have a theoretical unbounded latency in the worst case~\cite{fischer1985flp}.

\paragraph{Workloads.}
We run YCSB~\cite{ycsb} workloads A
(50\% \getop{}s and 50\% \updateop{}s)
and B (95\% \getop{}s and 5\% \updateop{}s)
with Zipfian (.99) key distribution.
 Unless 
 stated otherwise, we use 3 replicas,
  100K keys of 24\,B, 64\,B values, 4 clients that issue
one operation at a time, and index caches large enough to cache all key locations. There is a warm-up phase of 1M operations and we measure the following 1M operations.

\subsection{Latency} \label{sec:eval:latency}

  We evaluate the latency of all four systems with 4 clients, each issuing 1 operation at a time.
 Figure~\ref{fig:latency-cdf} shows the
  results for YCSB workload B with Zipfian key distribution.

For \getop{}s, RAW has a median latency of 1.9\,\us.
\kvs has a median latency of
  2.4\,\us---a mere 27\% increase over RAW---,
  and a very low spread (P1=2.2\,\us and P99=2.8\,\us).
FUSEE has a bimodal distribution:
  87\% of operations run in one
  \rt in 2.9\,\us (+53\% vs RAW), while keys recently modified by other clients require accessing the index to obtain the new location,
  which takes a second \rt, reaching 4.8\,\us at the 90th percentile (+157\% vs RAW).
DM-ABD comes last with a median latency of
  4.3\,\us (+130\% vs RAW) due to its use of pure out-of-place updates.
 
For \updateop{}s, RAW has a median latency of 1.6\,\us.
\kvs runs in one \rt with a median latency of 3.1\,\us, a 92\% increase over RAW.
DM-ABD always runs in two \rts,
  with a median of 4.9\,\us (+206\% vs RAW).
FUSEE comes last with a bimodal distribution:
  rarely modified keys take 4 \rts in 8.5\,\us (+431\% vs RAW),
  while hot keys take 5 in 10.4\,\us at the 90th percentile (+554\% vs RAW).

Under YCSB workload A (not shown),
  all systems perform similarly, except FUSEE whose
  switch to slower operations begins much earlier at the 41st percentile.
Overall, \kvs adds less than 1 RTT of overhead over RAW,
  and has substantially lower overhead than other replicated
  solutions.

\begin{figure}
  \centering
  \includegraphics[width=\columnwidth]{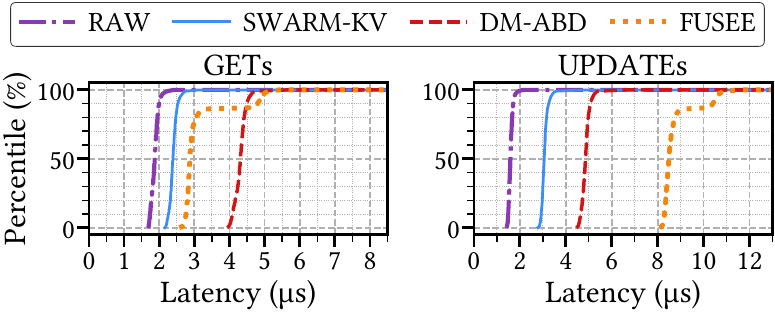}
  \caption{Latency CDFs of RAW, \kvs, DM-ABD and FUSEE with YCSB workload B (95\% \getop{}s and 5\% \updateop{}s), Zipfian key distribution, and 4 clients. }
  \label{fig:latency-cdf}
\end{figure}
\begin{figure}
  \centering
  \includegraphics[width=\columnwidth]{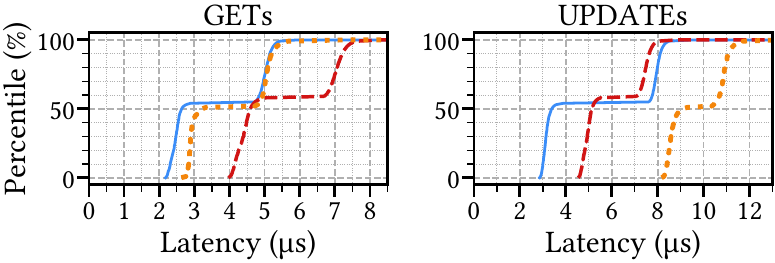}
  \caption{Experiment of Figure~\ref{fig:latency-cdf}, but with 1M keys which do not all fit in index caches, and excluding RAW.}
  \label{fig:limited-cache}
\end{figure}
 
We run a similar experiment for 1M keys with clients using small index caches, and observe how the latency distribution changes across \kvs, DM-ABD and FUSEE (RAW, which merely established a lower bound on latency, is omitted).
We limit the cache size of each
  system to 5\,MiB of system-specific metadata.
More precisely, in DM-ABD and FUSEE,
  cache entries have 24\,B, while
  in \kvs, they have 32\,B since they also include \innout's metadata field (see Figure~\ref{fig:innout-write}).
As a result, DM-ABD and FUSEE can
cache more keys (21.8\% of them),
  compared to \kvs (16.4\% of them).
This accounting does not include
  metadata for the cache replacement
  policy, which is the same
  for all systems (${\approx}$32\,B per entry
  for an approximation of LFU).\footnote{Had we accounted
  for this extra overhead, the
  difference in cache coverage
  between DM-ABD/FUSEE and \kvs
  would be smaller, thus benefitting
  \kvs.
We extend the warm-up to 8M operations to stabilize the cache policy.

Figure \ref{fig:limited-cache}
 shows the results. With this configuration, all systems} exhibit a bimodal distribution. 
On cache hits, the latency distribution follows that of Figure \ref{fig:latency-cdf}. Cache misses add 1 extra \rt to read the index on all systems, except for \updateop{}s on SWARM-KV where 2 additional \rts are required: one to read the index, and one to read the latest metadata buffer.
With such small caches, we measure
 that DM-ABD and FUSEE have a miss rate of 42.5\%, but, similarly
 to Figure \ref{fig:latency-cdf}, 11\% of FUSEE's cache hits
 occur on
stale pointers due to recent modifications by other clients.
 Thus, FUSEE needs to access the index 48.8\% of the time.
While \kvs caches 25\% fewer keys, the combination of a Zipfian distribution with an LFU replacement policy ensures that the remaining 75\% are likely to be the hottest, so \kvs's miss rate only increases to 45.6\%. In this configuration, \kvs's average latency remains better than DM-ABD's and FUSEE's
for both types of operations. Upon a cache miss, \kvs's and DM-ABD's \updateop{} algorithms are equivalent, but the former performs slightly worse as it fetches multiple metadata buffers (\S\ref{sec:ino:max}).
 
\subsection{Per-Core Throughput and Latency under Load} \label{sec:eval:lat-tput}

\begin{figure}
  \centering
  \includegraphics[width=\columnwidth]{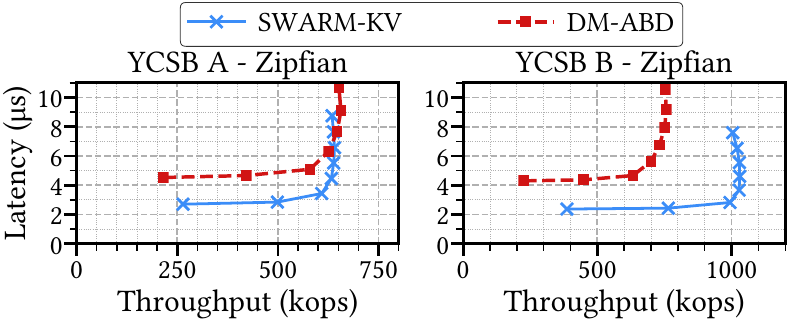}
  \caption{Per-core throughput-latency graph of \kvs and DM-ABD with YCSB workloads A and B, Zipfian key distribution, and 4 clients. Each marker is a measurement for a given number of concurrent
  operations from 1 to 8.}
  \label{fig:eval-lat-tput}
\end{figure}

We study \kvs's per-core
  throughput and latency under load.
Figure~\ref{fig:eval-lat-tput} shows
  the results with each
  single-threaded client
  running
  from 1 to 8
  concurrent operations and DM-ABD as a reference,
  for YCSB workloads A and B with Zipfian key distribution.

Both workloads show large gains in throughput and low impact on latency
  with 3 concurrent operations, before reaching a throughput-latency wall
  as the CPU is bottlenecked by the submission of RDMA requests.
With YCSB A and 1 operation at a time, \kvs's average latency is 2.7\,\us with a throughput of 264\,kops.
Running a second operation has little impact on latency (+5\%), which reaches 2.8\,\us, while nearly doubling the throughput at 499\,kops.
Running a third operation increases the latency by 21\%, reaching 3.4\,\us, but the throughput 
  only increases by 22\% at 609\,kops.
Additional operations increase the latency by 1\,\us each; a maximum throughput of 640\,kops is reached with 6 concurrent operations.
DM-ABD reaches a similar maximum throughput.

YCSB B has fewer \updateop{}s, so \kvs's average latency starts at a lower 2.4\,\us with a throughput of 389\,kops.
Similarly to YCSB A, having more than 3 concurrent operations yields negligible throughput
  enhancement while increasing latency by nearly 1\,\us with each additional operation.
A maximum throughput of 1030\,kops is achieved with 5 concurrent operations.
\kvs outperforms DM-ABD.

  This limited throughput gain
  is due to \kvs's
lack of batching
  and the fixed cost of issuing a
  series of RDMA operations to a memory node.
As the latter takes 200+\,ns and each \getop{}/\updateop{} reaches
 2 memory nodes,
  a core can send 1
  additional operation while waiting for the first to complete, but issuing a second delays the processing of the first.

\subsection{Scalability} \label{sec:eval:scalability}

\begin{figure}
  \centering
  \includegraphics[width=\columnwidth]{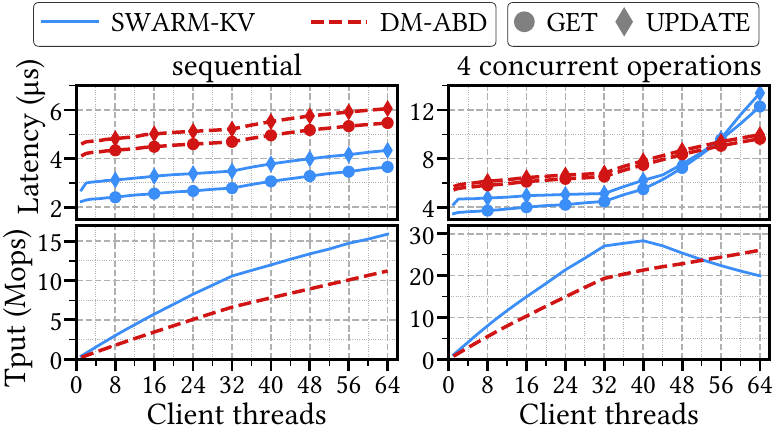}
  \caption{Throughput and average latency of \kvs and DM-ABD with YCSB workload B, 
  Zipfian key distribution, when varying the number of clients and concurrency.}
  \label{fig:eval-scale}
\end{figure}

We evaluate the scalability of \kvs by running the workloads with
  a number of single-threaded clients ranging from 1 to 64 while measuring the latency of \getop{}s and \updateop{}s, and the total throughput.
Figure~\ref{fig:eval-scale} shows the results for YCSB workload B with Zipfian key distribution when clients try to optimize latency (i.e., with a single operation at a time)
  and when clients try to optimize throughput (i.e., with 4 concurrent operations).
We plot DM-ABD for reference.

With 1 operation at a time, \kvs's throughput scales nearly linearly with the number of clients, reaching a maximum of 15.9 Mops with 64 clients.
After 32 clients, the throughput grows slower due to hyper-threading.
The latency of \getop{}s moderately increases from 2.2\,\us with a single client to 3.69\,\us with 64, with virtually all \getop{}s completing in a single \rt.
Similarly, the latency of \updateop{}s, increases from 2.7\,\us to 4.3\,\us with 64 clients.
Both \getop{}s and \updateop{}s have much lower latency than DM-ABD's.

With 4 concurrent operations, the throughput of \kvs scales nearly linearly until 40 clients where it reaches a peak of 28.3 Mops with 5.5\,\us \getop{}s, and 6.22\,\us \updateop{}s.
After this point, \kvs bottlenecks our 100\,Gbps fabric, which leads to a fast increase in latency---underperforming DM-ABD at 56 clients---and a drop in throughput.

\kvs's good scalability comes from its full bypass
  of the CPUs of memory nodes, 
  the contention-handling mechanism of its \maxop substitute (\S\ref{sec:ino:max}), and its efficient guessing of fresh timestamps using loosely synchronized clocks.
 
\subsection{Impact of Value Size} \label{sec:eval:value-size}

We study the effect of different value sizes ranging from 16\,B to 8\,KiB on \kvs's latency and throughput.
Suspecting that large values might perform poorly due to \innout's combination of in-place and out-of-place updates, we run the same experiments with a version of \kvs that does not use in-place updates.
Figure~\ref{fig:eval-size} shows the
  results for YCSB workloads A and B under Zipfian key distribution.

\begin{figure}
  \centering
  \includegraphics[width=\columnwidth]{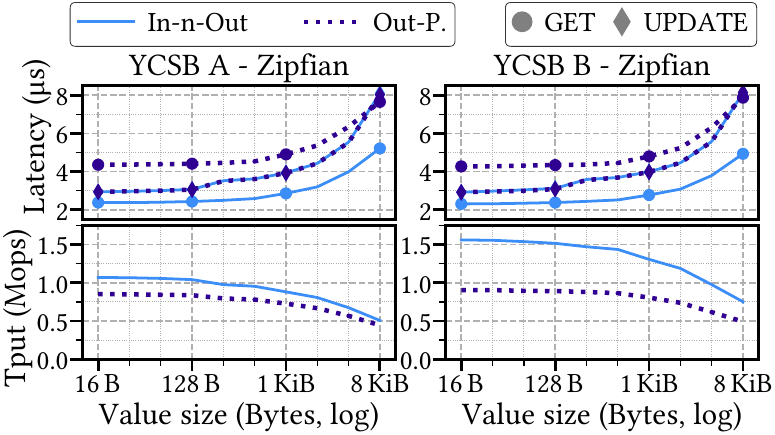}
  \caption{Throughput and average latency of \kvs with YCSB workloads A and B, and varying value sizes. Comparison with a \kvs variant without in-place updates.}
  \label{fig:eval-size}
\end{figure}

Regardless of the workload, the latency of \kvs operations grows linearly with the increase in value size and values
up to 8\,KiB still achieve single-digit latency.

One could expect \innout's combination of in-place and out-of-place updates to
  be detrimental to the latency of large keys: it is not.
Our evaluation shows that, regardless of the YCSB workload and the value size:
  (1) \getop{}s always benefit from in-place data, with large 8\,KiB values still being 33\% faster (by 2.5\,\us),
  (2) \updateop{}s comprising in-place updates are just as fast as purely out-of-place ones thanks to the lazy writing of in-place data (\S\ref{sec:implementation}),
  and (3) \innout's combination leads to higher total throughput, especially for read-intensive workloads (+50\% at 8\,KiB for YCSB workload B).

\subsection{Impact of Replication Factor} \label{sec:eval:factor}

We study the impact of the replication factor
  on \kvs's latency and throughput
  by using 3, 5, or 7 replicas per key.
Due to our setup having only 4 memory nodes,
  we sometimes store 2 replicas on the same node.
Figure~\ref{fig:eval-replication} shows the
  results for YCSB workload B with Zipfian key
  distribution and DM-ABD for reference.
Both systems replicate
  to a mere majority of the replicas (e.g., with 3 replicas, both write to only 2 replicas to tolerate 1 failure) (\S\ref{sec:implementation}).

\begin{figure}[bp]
  \centering
  \includegraphics[width=\columnwidth]{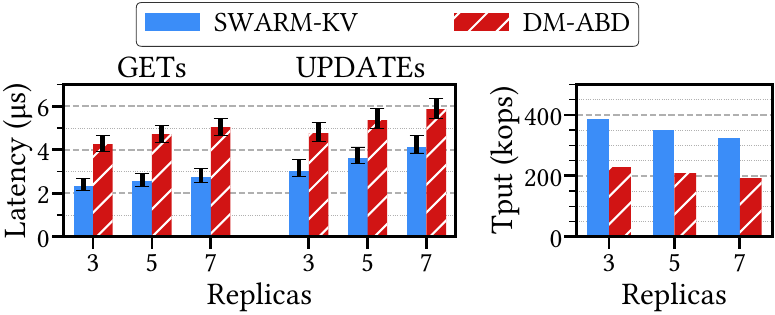}
  \caption{Median latency and per-client throughput of \kvs and DM-ABD with YCSB workload B, Zipfian key distribution and different numbers of replicas. Whiskers depict the 1st and the 99th percentiles.}
  \label{fig:eval-replication}
\end{figure}

Both \kvs and DM-ABD see their latency increase linearly with the number of replicas.
\kvs starts with a median latency of 2.3\,\us for \getop{}s and 3\,\us for \updateop{}s with 3 replicas.
Each 2 additional replicas increase the latency of \getop{}s by 0.2\,\us and \updateop{}s by 0.5\,\us, due to the cost of issuing an additional series of RDMA operations.
  Thus, \kvs's throughput
  drops by 9\% moving from 3 to 5 replicas,
  and by an additional 7\% moving from 5 to 7.
DM-ABD
  shows similar latency increase, 
but starts at 4.3\,\us for \getop{}s and 4.7\,\us for \updateop{}s.
For both systems, the spread in latency between the 1st and the 99th percentiles is stable, increasing by a mere 10\% with each additional 2 replicas.

\subsection{Resource Consumption} \label{sec:eval:res-cons}
Table~\ref{table-res} shows the computed resource consumption of the four systems when tolerating 1 failure, storing 1M keys with values of size 1\,KiB, with 4 clients executing 200\,kops each under YCSB workload B, and running garbage collection (\S\ref{sec:ino:rm} and \S\ref{sec:kvs:rm}) once per second.

\begin{table}
\caption{Resource consumption, for 1M keys, values of 1\,KiB, YCSB workload B, and 4 clients each executing 200 kops.}
\begin{tabular}{lccccc}
  & \multicolumn{3}{c}{\textbf{Per single-threaded client}} & \textbf{Disagg.} \\
  & \textbf{CPU} & \textbf{Cache} & \textbf{IO BW} & \textbf{Mem.} \\ \hline
\textbf{RAW} & 46.6\% & 22.9\,MiB & 6.55\,Gbps & 0.95\,GiB \\ 
\textbf{DM-ABD} & 99.0\% & 22.9\,MiB & 6.99\,Gbps & 3.00\,GiB \\
\textbf{\kvs} & 61.3\% & 30.5\,MiB & 7.41\,Gbps & 4.06\,GiB \\
\textbf{FUSEE} & 74.2\% & 22.9\,MiB & 8.15\,Gbps & 2.04\,GiB \\ \hline
\end{tabular}
\label{table-res}
\end{table}

The cost of \kvs's low latency is a higher use of disaggregated memory. 
RAW, due to its lack of replication, has minimal resource use.
DM-ABD uses 3 replicas per key plus additional metadata, so its consumption of disaggregated memory is 3.1$\times$ higher than RAW's.
However, as DM-ABD \updateop{}s replicate values to only 2 replicas and \getop{}s retrieve the full values from only 1, DM-ABD's bandwidth use is only 7\% higher than RAW's.
 \kvs uses 35\% more disaggregated memory and 6\% more bandwidth than DM-ABD due to \innout and its contention-reduction mechanism (\S\ref{sec:ino:max}).
Moreover,
 \kvs's clients consume 33\% more memory
than DM-ABD's due to their need to cache \innout's 8\,B metadata.
Thanks to its synchronous replication protocol, FUSEE requires only 2 replicas to tolerate 1 failure, which makes it consume 32\% and 50\% less disaggregated memory than DM-ABD and \kvs, respectively.
However, FUSEE's optimistic \getop{}s fail 13\% of the time, wasting bandwidth, so FUSEE consumes 16\% and 10\% more IO than DM-ABD and \kvs, respectively.

Thanks to its low latency, \kvs does little polling, so its CPU use is only 30\%
higher than RAW's, while being 38\% lower than DM-ABD's, and 16\% lower than FUSEE's.

\subsection{Memory Node Failure} \label{sec:eval:knob}
We study the impact of the failure of a memory node
  on \kvs's performance and availability by crashing a memory node while
  measuring \kvs's throughput and latency.
Figure~\ref{fig:eval-failure} shows the
  results.

\begin{figure}
  \centering
  \includegraphics[width=\columnwidth]{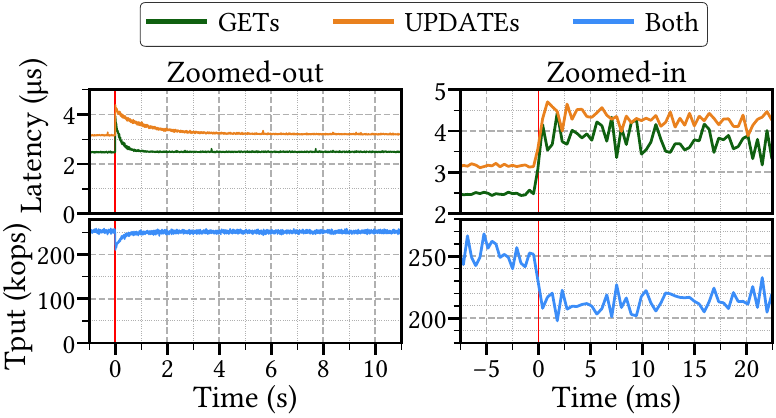}
  \caption{Latency and throughput of a \kvs client with YCSB workload A before and after
  the failure
  (at $t{=}0$)
  of a memory node. Zoomed-out on the left to see performance recovery. Zoomed-in on the right to see the transition.}
  \label{fig:eval-failure}
\end{figure}

The failure of a memory node has no impact on the availability of \kvs, but it temporarily affects its latency.
This impact results from \kvs's bandwidth-reduction mechanism which saves bandwidth in the absence of failure (\S\ref{sec:implementation}).
 Upon a failure, ongoing \kvs operations merely contact additional memory nodes as the majority they initially contacted takes longer than expected to 
respond.
Importantly, this change of replicas requires no reconfiguration of the system, so it has negligible impact.
\kvs's latency increases after a failure due to the loss of in-place data, and the lack of unanimity across the remaining replicas.
Both are eventually rebuilt by subsequent operations, which allows operations 
to recover their latency.

In contrast, synchronous systems like FUSEE reportedly take at least tens of milliseconds to recover from failures~\cite{fusee, hunt2010zookeeper}.
This is because they need to accurately detect failures before initiating recovery procedures involving multiple phases such as scanning logs, transferring state, and changing roles.

\begin{figure}[b]
  \centering
  \includegraphics[width=\columnwidth]{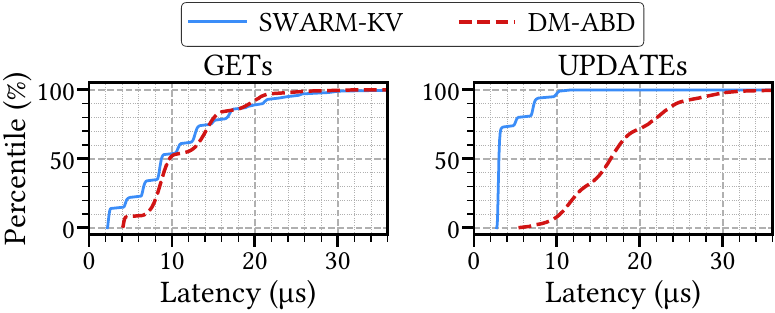}
  \caption{Latency CDFs for \kvs and DM-ABD with YCSB workload A, a single key-value pair, and 16 clients.} \label{fig:eval-contention}
\end{figure}

\subsection{Extreme Contention} \label{sec:eval:contention}

We evaluate the impact of extreme contention on \kvs's performance
  by studying the latency of \getop{}s and \updateop{}s when either 16 or 32 clients
  hit a single key.
Figure~\ref{fig:eval-contention} shows the
  results under YCSB workload A with 16 clients.

With 16 clients, \getop{}s remain live but the latency of their 99th percentile degrades to 30\,\us.
We see that only 14\% of \getop{}s complete in 1 \rt by finding a correct in-place value.
8\% of \getop{}s complete in 2 \rts by reading the value out of place.
The next 78\% of \getop{}s, however, complete in more than 2 \rts as they must either iterate a few times before identifying a valid value, or reading \protocol' max register requires a writing step (\S\ref{sec:background:abd}).
The latency of DM-ABD's \getop{}s also significantly increases because of the latter, but it is drastically exacerbated by the 
contention on the \maxop substitute, which lacks \kvs's contention-reduction mechanism (\S\ref{sec:ino:max}).
 
Meanwhile, \updateop{}s complete in at most 4 \rts, with their 99th percentile reaching 10\,\us.
73\% of \updateop{}s complete in 1 \rt by guessing a fresh timestamp and identifying it as so immediately.
7\% of \updateop{}s complete in 2 \rts by guessing a fresh timestamp and using their timestamp lock to detect it as fresh.
14\% of \updateop{}s complete in 3 \rts by also having to either (a) replicate the value of another writer to a majority or (b) retry their \writeop with a fresh timestamp.
6\% of \updateop{}s complete in 4 \rts by having to do both (a) and (b).
Similarly to \getop{}s, the latency of DM-ABD \updateop{}s significantly degrades due to high contention on the unreliable max registers.

With 32 clients (not shown), \kvs's \getop{}s see their median latency degrade by 4\,\us while \updateop{}s are mostly unaffected.
DM-ABD, however, is severely impacted as the median latency of its \getop{}s and \updateop{}s both reach 47\,\us.
 
\subsection{Scalability of our \maxop Substitute} \label{sec:eval:micro}
We now evaluate how varying the number of
  \innout 8\,B metadata buffers (\S\ref{sec:ino:max}) affects 
  the latency of \kvs by running YCSB workloads A and B with 64 clients.
  Figure~\ref{fig:eval-max} shows the results for workload B.

\begin{figure}
  \centering
  \includegraphics[width=\columnwidth]{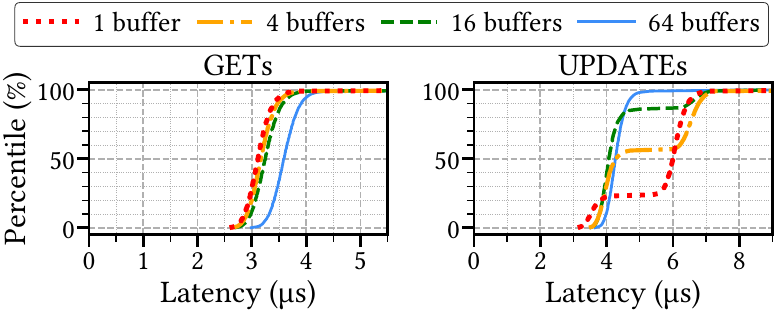}
  \caption{Latency CDFs for \kvs with YCSB workload B, 64 clients, and a varying number of metadata buffers.} \label{fig:eval-max}
\end{figure}

The use of multiple metadata buffers per key significantly reduces the latency of \updateop{}s but slightly increases the latency of \getop{}s.
With a single metadata buffer per key that is shared among all
  writers, only 23\% of \updateop{}s complete in one \rt
  by having accessed the key since its last \updateop{}.
With 4 buffers per key, each shared by 16 clients,
  merely 57\% of \updateop{}s complete in one \rt.
This raises to 86\% with 16 buffers per key.
Lastly, with one metadata buffer per key and per writer, 99\% of \updateop{}s finish in 1 \rt, the last percent resulting from concurrency or from the guess of stale timestamps.

In the meantime, increasing the number of buffers increases the latency of \getop{}s and single-\rt \updateop{}s because they must read larger arrays of buffers.
With 1 buffer, \getop{}s and single-\rt \updateop{}s have a median latency of 3.1\,\us and 3.6\,\us, respectively.
These latencies grow to 3.2\,\us and 3.9\,\us with 4 buffers, 3.3\,\us and 4\,\us with 16 buffers, and 3.6\,\us and 4.3\,\us with 64 buffers.

Under YCSB workload A (not shown) \getop{}s have the same latency, but only 2\% of \updateop{}s complete in one \rt with 1 buffer, 11\% with 4, 39\% with 16, and 99\% with 64.
We thus recommend using 1 buffer per client as it has low impact on the latency of 1-\rt operations (+15\% for 64 clients), but makes 1-\rt \updateop{}s much more frequent.
 However, this means that the more clients there are, the larger the arrays of metadata buffers become, which hinders performance and limits scalability.
\section{Related Work} \label{sec:related}

Our work touches on various topics:
replication, 
concurrency control, key-value stores and disaggregated memory.

\paragraph{Replication.}
Linearizable \readop{}-\writeop{} replication has been the subject of active research. 
The seminal ABD~\cite{abd-emulation} protocol is the first
of a long line of protocols~\cite{dutta2004howfast, burkhard2009efficiency, huang2020finegrained, tseng2023distributed}.
Dutta \textit{et al.}~\cite{dutta2004howfast}
prove that one cannot always have one-\rt
\readop{}s or \writeop{}s,
so several researchers~\cite{georgiou2008robustness, georgiou2009semifast, konwar2020semifast} investigated semi-fast implementations, where \writeop{}s take one \rt, but
\readop{}s take longer.

Similarly to \sysname, Gus~\cite{tseng2023distributed} achieves single-\rt latency
for both operations in the common case. However, Gus fundamentally differs
from \sysname in the extra assumptions it makes and the techniques it uses.
Specifically: (1) Gus assumes that replicas are regular machines capable of running arbitrary RPCs; in contrast, we consider memory replicas that are operated on by one-sided RDMA, which is a challenge
we address.
(2) Gus requires each client to be co-hosted
with one of the replicas for low latency;
in contrast, our clients are hosts remotely connected to memory nodes, which
makes the problem harder. (3) Gus requires more than 2f+1 replicas to tolerate f failures when f>2; in contrast, we require only 2f+1 replicas. Gus, however, has a theoretical advantage over our proposal: even in the worst case, \readop{}s and \writeop{}s complete in at most 2 RTTs (while \sysname/Safe-Guess may take a slow path in rare cases).

Various consensus-based protocols have been proposed to achieve strongly consistent memory replication. 
EPaxos~\cite{moraru2013epaxos}, Giza~\cite{chen2017giza}, and TEMPO~\cite{enes2021tempo} are recent examples of such protocols.
While these protocols can replicate any data object, 
\sysname considers only \readop{}/\writeop{} operations. 
However, consensus-based replication protocols typically have a slower fast path than \sysname (i.e., requiring more roundtrips),
and incur long delays when synchrony assumptions do not hold (e.g., when timeouts are exceeded).
Gryff~\cite{burke2020gryff} is an interesting consensus-based system
which combines consensus and ABD: it provides \readop{}/\writeop{} operations without utilizing consensus,
and resorts to consensus for other operations.
Although Gryff matches \sysname's single-\rt
\readop{}s, it requires two \rt{}s for \writeop{}s.
Importantly, none of the aforementioned systems considers the disaggregated memory setting where replicas sit on nodes without general compute capability. 
Carbink~\cite{zhou2022carbink} implements reliable disaggregated memory, but considering a non-shared access: only one client accesses
a given memory location. 

\paragraph{Concurrency control.}
Some of \sysname's techniques to achieve single-\rt 
\readop{}/\writeop{}
latency 
are inspired from other systems.
Utilizing real-time timestamps for optimistic ordering appears in NCC~\cite{lu2023ncc} to efficiently order transactions.
Self-validating \readop{}s appear in both Pilaf~\cite{mitchell2013pilaf} and FARM~\cite{dragojevic2014farm} to avoid inconsistencies stemming from RDMA's weak \readop{} guarantees.
Out-of-place updates appear in Clover~\cite{tsai2020clover}, which uses them to achieve concurrent lock-free access to disaggregated memory.
Sherman~\cite{sherman2022} leverages the FIFO guarantee of RDMA to minimize memory \rts when
modifying a lock-protected tree structure. 

\paragraph{Disaggregated KVS.}
Several recent key-value stores have been designed for disaggregated memory, e.g., DINOMO~\cite{lee2022dinomo},
  AsymNVM~\cite{ma2020asymnvm}, Clover~\cite{tsai2020clover}, ROLEX~\cite{li2023rolex}, and FUSEE~\cite{fusee}.
These works differ from \kvs in various ways.
DINOMO requires clients to communicate with
  intermediary key-value servers
  instead of directly accessing memory nodes, which requires at least two \rts to update data.
DINOMO, AsymNVM, and ROLEX make different assumptions from us: they
  require memory nodes to have some compute
  to offload custom functionality of the key-value 
  store.
DINOMO, AsymNVM, and Clover have a different goal from us:
  they aim to provide durability, which require persistent
  memory and different techniques.
Moreover, Clover is not designed for scalability since
  it uses a metadata server in the control plane,
  which becomes a bottleneck.
Meanwhile, ROLEX focuses on improving the index performance
  through learning. 
The closest related work to \kvs is FUSEE, 
  which allows clients to access disaggregated memory directly,
  requires no compute at memory nodes,
  provides scalability,
  and targets volatile memory.
  FUSEE heavily relies on caching to keep latency low, but its \getop{}s often finish in 2 \rts, and its \updateop{}s take at least 4 \rt{}s, resulting in lower performance, as shown by our evaluation.
\section{Concluding Remarks} \label{sec:conclusion}

\sysname is the first replication scheme for in-disaggregated-memory shared objects to provide
  (1) single-\rt \readop{}s and \writeop{}s in the common case,
  (2) strong consistency (linearizability), and
  (3) strong liveness (wait-freedom).
  \sysname makes two independent contributions.
The first, \protocol, is the first such protocol assuming memory
  nodes can provide atomic conditional updates to large objects
  in a single \rt.
The second, \innout, is a novel technique to obtain
  the atomic conditional update 
  of large buffers in one \rt without requiring compute at memory nodes.
We showed the utility of \sysname
  by building \kvs, a low-latency, strongly consistent
  and highly available disaggregated key-value store
  that significantly outperforms the state of the art
  and incurs little replication latency overhead.

We foresee several extensions to our work. 
While
\sysname considers in-memory replication, adding durability
without harming performance
is appealing, but appears non-trivial~\cite{ganesan2021orca,leblanc2023chipmunk}.
Also, formally proving \sysname's algorithms, but also its implementation (along the lines of IronFleet~\cite{hawblitzel2015ironfleet}), is interesting future work.

\section*{Acknowledgments} \label{sec:acknowledgments}
We thank the anonymous reviewers and our shepherd Wyatt Lloyd
for their valuable comments,
as well as the anonymous artifact evaluators for reviewing our implementation.
We also thank Thomas Bourgeat for his feedback.

%-------------------------------------------------------------------------------
\bibliographystyle{ACM-Reference-Format}
\bibliography{references}

\newpage\clearpage
\appendix
\section*{NON-PEER-REVIEWED APPENDICES}
\section{Reliable \wmr}
\label{app:wmr}

\begin{lstlisting}[caption={Reliable \wmr Pseudocode},label={alg:wmr}]
maxregs: shared array of unreliable max registers, initialized to [@$\bot$,\ldots,$\bot$@]
cache: local array of values, same length a@$ $@s maxregs, initialized to [@$\bot$,\ldots,$\bot$@]

def inner_write(v):
  for mr in 0..maxregs.length:@\label{line:write-for-start}@
    if cache[mr] < v:@\label{line:inner-write-if}@
      async: 
        maxregs[mr].write(v)@\label{line:update-mr}@
        cache[mr] = v@\label{line:update-cache-w}@ 
  wait until majority of cache >= v@\label{line:inner-write-wait}@
  
def WRITE(v):
  inner_write(v)

def READ():
  for mr in 0..maxregs.length:@\label{line:read-for-start}@
    async: cache[mr] = maxregs[mr].read() @\label{line:update-cache-r}@
  wait for a majority of answers@\label{line:read-wait}@
  v = max(cache.values)@\label{line:cache-max}@
  inner_write(v)@\label{line:invoke-inner-write}@
  return v
\end{lstlisting}

\paragraph{Interface} 
\wmr{}s have the following interface:
\begin{itemize}
  \item $\readop()$ $\rightarrow v$
  \item $\writeop(v) \rightarrow \t{ok}$ 
\end{itemize}

\paragraph{Properties}
\wmr{}s satisfy the following properties:
\begin{itemize}
  \item \textit{Validity}: If \readop $R$ returns $v$, then either (a) $v = \bot$, or (b) some operation $\writeop(v)$ was invoked before $R$ returns.
  \item \textit{Read-read monotonicity}: If a \readop{} returns value $y$ and a preceding \readop{} returns value $x$, then $x \leq y$.
  \item \textit{Write-read monotonicity}: If a \readop{} returns value $y$ and a preceding \writeop{} writes value $x$, then $x \leq y$.
  \item \textit{Liveness (wait-freedom)}: Every invoked operation eventually returns.
\end{itemize}

\subsection{Correctness}
We prove that our implementation of a reliable
\wmr object, shown in \Cref{alg:wmr}, satisfies the properties above.

\begin{theorem}
  \Cref{alg:wmr} satisfies validity.
\end{theorem}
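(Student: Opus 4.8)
The plan is to reduce validity of the reliable \wmr{} to the validity of the underlying fallible max registers, tracing any returned value back to a genuine \writeop{} invocation. Fix a \readop{} $R$ that returns a value $v$. If $v = \bot$ we are in case~(a) and there is nothing to prove, so assume $v \neq \bot$. Since $v$ is the maximum of the cached values (\Cref{line:cache-max}), it equals the value $cache[mr]$ of some slot $mr$, and that slot was last assigned the result of reading the underlying register $maxregs[mr]$ (\Cref{line:update-cache-r}), which therefore returned $v$. By validity of that underlying max register, a non-$\bot$ read implies that an underlying \writeop{} of value $v$ on $maxregs[mr]$ was invoked before the read returned, hence before $R$ returns. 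This already exhibits one underlying write of $v$ preceding $R$'s return, but it need not stem directly from a top-level \writeop{}: a \readop{} can re-inject a value it has read back into the registers through its own \texttt{inner\_write} (\Cref{line:invoke-inner-write}). The crux is therefore to show that such re-circulated values ultimately originate from a real \writeop{}.

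To handle this re-circulation I would argue by minimality over write events rather than by naive one-step reasoning. First I would record two structural facts: every write to an underlying register occurs on \Cref{line:update-mr}, inside some \texttt{inner\_write}$(x)$ call, and writes exactly that call's argument $x$; and \texttt{inner\_write} is invoked either from a top-level \writeop{} (with the client's value) or from a \readop{} (with the maximum of its cache). Let $S$ be the set of underlying-write invocations carrying value $v$ that are issued before $R$ returns. This set is finite---finitely many operations are invoked in the execution prefix up to $R$'s return---and, by the previous paragraph, non-empty, so it has an earliest-issued element $e_0$.

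Then I would inspect the \texttt{inner\_write}$(v)$ call containing $e_0$. If it was invoked from a top-level \writeop{}, that \writeop{} carried value $v$ and was invoked before $e_0$, hence before $R$ returns, giving case~(b). Otherwise it was invoked from some \readop{} $R'$ whose cache maximum equals $v$; then some slot held $cache'[mr'] = v$, written from a read of $maxregs[mr']$ that returned $v$ and that completed before $R'$ computed its maximum, hence before $e_0$. Applying underlying validity to that read yields a further underlying-write invocation of value $v$ issued strictly before $e_0$ and before $R$ returns---an element of $S$ earlier than $e_0$, contradicting minimality. Thus the \texttt{inner\_write}$(v)$ containing $e_0$ must have come from a \writeop{}, which is exactly what we need.

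The main obstacle, as flagged, is precisely the value re-circulation performed by \readop{}s, which defeats any direct argument; the minimality (well-foundedness) step is what tames it. I expect the only remaining subtlety to be the careful bookkeeping of the happens-before relations---that the underlying read returns before the enclosing \readop{}'s maximum computation, which in turn precedes its \texttt{inner\_write}---needed to certify that the extracted earlier write genuinely belongs to $S$ and strictly precedes $e_0$.
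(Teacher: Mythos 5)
Your overall strategy is sound and, at its core, parallels the paper's own argument: both tame the re-circulation problem by a minimal-counterexample argument (the paper takes the earliest time at which an invalid value appears anywhere in \t{maxregs} or \t{cache}; you take the earliest-issued underlying write of the returned value $v$). However, your written proof has a concrete gap: twice you assert that a cache slot holding $v$ ``was last assigned the result of reading the underlying register'' at \Cref{line:update-cache-r}. That is false in general. Cache slots are also written at \Cref{line:update-cache-w}, inside \t{inner\_write}: after \t{maxregs[mr].write(v)} completes, the same async block sets \t{cache[mr] = v}. Moreover, a \readop{} takes its maximum at \Cref{line:cache-max} over \emph{all} cache slots, not just those refreshed by the majority of reads it waited for at \Cref{line:read-wait}; hence the maximum can come from a stale slot whose last assignment was at \Cref{line:update-cache-w} (e.g., left over from a preceding \writeop{}, or from a previous \readop{}'s write-back at \Cref{line:invoke-inner-write}). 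This affects both places where you use the claim: the non-emptiness of $S$, and the tracing of $R'$'s cache maximum inside the minimality step.

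Fortunately, the omitted case is the easy one, so your proof is repairable without changing its structure: if the slot was last assigned at \Cref{line:update-cache-w}, then the enclosing \t{inner\_write(v)} had already invoked \t{maxregs[mr].write(v)} at \Cref{line:update-mr} before that assignment, which directly exhibits an underlying-write invocation of $v$---placing it in $S$ (for the base case), or issued strictly before $e_0$ and hence contradicting minimality (for the inductive case). With this case added, your argument goes through, assuming---as you do, and as the paper implicitly does---that the underlying fallible max registers themselves satisfy validity. Compared to the paper's proof, which establishes the stronger global invariant that \t{maxregs} and \t{cache} contain only valid values at all times and then does an exhaustive case analysis over every line that mutates them, your backward-tracing argument is more localized but needs exactly that same exhaustive case analysis over every line that writes to \t{cache}; the paper's invariant formulation forces that analysis explicitly, which is why it does not fall into this trap.
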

\begin{proof}
  We say that a value $v$ is valid at time $t$ if either (a) $v = \bot$, or (b) some operation $\writeop(v)$ was invoked before time $t$. Thus the validity property can be restated as: if \readop $R$ returns $v$ at time $t$, then $v$ is valid at $t$. Note that if $v$ is valid at $t$, then $v$ is valid at any $t' \geq t$.

  We prove validity by showing that at any time $t$, \t{maxregs} and \t{cache} only contain valid values. Then validity follows immediately, since the value returned by \t{READ} at $p$ is the maximum value in $p$'s cache.

  Initially \t{maxregs} and \t{cache} only contain $\bot$, which is a valid value. Suppose by contradiction that there is a time $t$ and a value $v$ such that \t{maxregs} or \t{cache} contains $v$ at time $t$ and $v$ is not valid at $t$. Assume \textit{wlog} that $t$ is the earliest such time.

  Time $t$ must correspond to a step by some process $p$ which updates either \t{maxregs} or \t{cache}. Let us examine all such possible steps:
  \begin{itemize}
  \item \Cref{line:update-mr}. The value $v$ is the argument of the \t{inner\_write} function, thus either $v$ is the argument of a \writeop at time $t' \leq t$, or $v$ is obtained at \Cref{line:cache-max} at time $t'' < t$. In the former case, $v$ must be valid at time $t'$ and thus also at time $t$, since it is the input of a \writeop{} invoked before $t'$. In the latter case, $v$ is valid at time $t''$, since $t'' < t$ and $t$ is the earliest time when \t{maxregs} or \t{cache} contain an invalid value. In both cases, $v$ is valid at $t$, a contradiction.
  \item \Cref{line:update-cache-w}. By the same argument as above, $v$ must be valid at $t$, a contradiction.
  \item \Cref{line:update-cache-r}. This line contains two steps: first, at time $t' < t$, some value $v$ is read from \t{maxregs[mr]}, and then, at time $t$, $v$ is written into \t{cache[mr]}. By our assumption $v$ must be valid at time $t'$ (thus also at $t$), so we again reach a contradiction. 
  \end{itemize}
  All possibilities lead to a contradiction, thus it must be the case that \t{maxregs} and \t{cache} only contain valid values at all times. This completes the proof.
\end{proof}

\begin{observation}\label{obs:maj-higher}
  If \t{inner\_write(v)} completes at time $t$ at process $p$, then a majority of $p$'s cache slots contain values $\geq v$ at time $t$.
\end{observation}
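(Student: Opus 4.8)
The plan is to read the claim directly off the control flow of \t{inner\_write}: the observation is essentially a restatement of the function's terminating wait guard. First I would note that \t{inner\_write(v)} has a single, straight-line exit. Its body is a \t{for} loop that merely launches asynchronous updates, followed by the blocking statement on \Cref{line:inner-write-wait}; there is no early return and no branch around this wait. Hence \t{inner\_write(v)} completes precisely when that wait is satisfied.

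Second, I would observe that the guard on \Cref{line:inner-write-wait} holds exactly when a majority of $p$'s cache slots contain values $\geq v$. Thus the completion time $t$ is, by definition, the instant at which this predicate over $p$'s local \t{cache} is observed to be true. Since evaluating the guard and returning from the call are the same atomic step, there is no gap between ``guard observed true'' and ``function returns'' during which \t{cache} could be mutated. Therefore, at time $t$, a majority of $p$'s cache slots hold values $\geq v$, which is exactly the claim.

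The only subtlety worth spelling out is concurrency on the cache. I would point out that \t{cache} is \emph{local} to $p$, so no other process mutates it; its only writers are $p$'s own asynchronous operations (\Cref{line:update-cache-w} and \Cref{line:update-cache-r}), possibly still in flight from this or an earlier operation. This does not threaten the observation, which asserts the majority property only at the single instant $t$. Even though \t{cache} is overwritten by plain assignment rather than by a monotone max---so stale in-flight writes may later lower some slots and the cache need not be monotone---any such decrease occurs strictly after $t$ and is irrelevant here. I therefore expect no genuine obstacle: the entire argument reduces to making explicit that ``the call completes'' and ``the wait guard holds'' name the same event occurring at time $t$.
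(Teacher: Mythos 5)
Your proof is correct and matches the paper's reasoning: the paper states this as an \emph{observation} with no explicit proof precisely because it follows immediately from the wait guard at \Cref{line:inner-write-wait}, which is exactly the reading you make explicit. Your added remark about the cache being process-local and the claim holding only at the instant $t$ is a reasonable (and harmless) elaboration of the same argument.
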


\begin{lemma}\label{lem:cache-maxreg}
  At any process $p$ and time $t$, for any $0 \leq mr \leq \t{maxregs.length}$, if $\t{cache[mr]} = v$ at time $t$, then some \t{maxregs[mr].write(v)} or \t{maxregs[mr].read() $\rightarrow$ v} completed before $t$. 
\end{lemma}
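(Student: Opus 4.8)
The statement relates the local \t{cache[mr]} variable to operations performed on the underlying unreliable max register \t{maxregs[mr]}. The plan is to argue by tracking how \t{cache[mr]} can acquire its current value: every assignment to \t{cache[mr]} in \Cref{alg:wmr} is immediately preceded (in program order, on the same process $p$) by a corresponding completed operation on \t{maxregs[mr]}. So the natural approach is a case analysis over the lines that write to \t{cache[mr]}, combined with an observation that values only ever propagate into \t{cache[mr]} through these lines.

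First I would identify every program point that modifies \t{cache[mr]}. There are exactly two: \Cref{line:update-cache-w}, inside \t{inner\_write}, and \Cref{line:update-cache-r}, inside \t{READ}. For the present value $v$ of \t{cache[mr]} at time $t$, consider the most recent such assignment that set it to $v$; this happened at some time $t_0 \le t$. (If no assignment ever occurred, then $\t{cache[mr]} = \bot$ by initialization, and we may treat $\bot$ as vacuously satisfying the claim, or handle it as a base case.) I would then split on which line performed this assignment.

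In the case of \Cref{line:update-cache-r}, the assignment $\t{cache[mr]} = v$ is the second half of an \t{async} block whose first half is the read $\t{maxregs[mr].read()} \rightarrow v$; since the assignment uses the value returned by that read, the read must have completed before $\t{cache[mr]}$ was set to $v$, hence before $t_0 \le t$, giving exactly a completed \t{maxregs[mr].read() $\rightarrow$ v}. In the case of \Cref{line:update-cache-w}, the assignment is likewise the second statement of the \t{async} block on \Cref{line:update-mr}--\ref{line:update-cache-w}, immediately after $\t{maxregs[mr].write(v)}$; by program order on the same process, that write completed before the cache assignment, hence before $t$, which is precisely the other disjunct. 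In both cases the required operation completed before $t$, as claimed.

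The main subtlety I anticipate is justifying the \emph{ordering} within the \t{async} blocks: one must be careful that ``async'' does not permit the cache update to be observed before the corresponding memory operation completes. The intended reading is that each \t{async} block executes its two statements sequentially (the memory operation then the local cache write), so that $\t{cache[mr]} = v$ on \Cref{line:update-cache-w} genuinely follows the completion of $\t{maxregs[mr].write(v)}$, and similarly the assignment on \Cref{line:update-cache-r} genuinely follows the return of the read. Making this explicit—that the assignment depends on the value produced by the operation and therefore cannot precede it—is the crux of the argument; the rest is a routine enumeration of the two writing sites.
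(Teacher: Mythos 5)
Your proposal is correct and follows essentially the same route as the paper's proof: enumerate the only two lines that assign to \t{cache[mr]} (\Cref{line:update-cache-w} and \Cref{line:update-cache-r}) and observe that each assignment is immediately preceded, in program order within the same \t{async} block, by the corresponding completed \t{maxregs[mr]} operation. Your additional remarks on the initial $\bot$ value and the intra-block ordering are reasonable elaborations of points the paper leaves implicit, but they do not change the argument.
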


\begin{proof}
  Observe that \t{cache[mr]} is only updated at \Cref{line:update-cache-w} and \Cref{line:update-cache-r}. If \t{cache[mr]} becomes equal to $v$ at \Cref{line:update-cache-w}, then \t{maxregs[mr].write(v)} was called in the previous line. Otherwise, if \t{cache[mr]} becomes equal to $v$ at \Cref{line:update-cache-r}, then \t{maxregs[mr].read()} returned $v$ on the same line.
\end{proof}

\begin{lemma}\label{lem:read-after-inner-write}
  If \t{inner\_write(v)} completes at time $t$, then after $t$, reading from all \t{maxregs} and waiting for a majority of responses produces at least one value $\geq v$.
\end{lemma}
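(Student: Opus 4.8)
The plan is to combine the majority guarantee of \Cref{obs:maj-higher} with a standard quorum-intersection argument and the monotonicity of the underlying unreliable max registers.

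First I would invoke \Cref{obs:maj-higher}: since \t{inner\_write(v)} completes at time $t$ at some process $p$, there is a majority $M$ of $p$'s cache slots whose values are $\geq v$ at time $t$. Fix any slot $mr \in M$ and let $w$ be the value of \t{cache[mr]} at time $t$, so $w \geq v$. By \Cref{lem:cache-maxreg}, some write of $w$, or some read returning $w$, on \t{maxregs[mr]} completed strictly before $t$. Thus every slot of $M$ has, before $t$, witnessed a completed underlying operation carrying a value $\geq v$.

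Next, consider any read of all \t{maxregs} issued after $t$ that waits for a majority $M'$ of responses. Since \t{cache} and \t{maxregs} have the same length, $M$ and $M'$ are two majorities of the same index set, so $M \cap M' \neq \emptyset$; pick $mr \in M \cap M'$. For this slot, the argument above provides an operation that completed before $t$ and carried some value $u \geq v$; the response gathered in $M'$ comes from a read of \t{maxregs[mr]} issued after $t$, hence after that earlier operation completed. By the monotonicity of the unreliable max register---write-read monotonicity if the earlier operation was a \writeop, read-read monotonicity if it was a \readop---this later read returns a value $\geq u \geq v$. Hence at least one of the responses is $\geq v$, which is the claim.

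The step I expect to be the main obstacle is this last monotonicity argument: it must be applied uniformly to both witnesses produced by \Cref{lem:cache-maxreg} (a write versus a read on the slot), and I must be careful that ``completed before $t$'' for the witness, together with ``issued after $t$'' for the new read, really does order the two operations on \t{maxregs[mr]} so that the max register's monotonicity can legitimately be invoked. By comparison, the appeal to \Cref{obs:maj-higher} and the quorum-intersection counting are routine.
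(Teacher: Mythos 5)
Your proof is correct and follows essentially the same route as the paper's: apply \Cref{obs:maj-higher} to get a majority of cache slots holding values $\geq v$, intersect that majority with the majority of read responses, use \Cref{lem:cache-maxreg} to exhibit a completed underlying operation on the common slot, and conclude via the write-read or read-read monotonicity of the unreliable max register. The only difference is cosmetic (you instantiate \Cref{lem:cache-maxreg} on all slots of the majority before intersecting, whereas the paper intersects first), and the ordering concern you flag is handled exactly as you suggest.
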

\begin{proof}
  Let $p$ be the process that called \t{inner\_write(v)}. 
  By \Cref{obs:maj-higher}, a majority $M_1$ of $p$'s cache slots contain values $\geq v$ at $t$. 
  Let $M_2$ be the majority of \t{maxregs} for which \readop responses are received. $M_1$ and $M_2$ must intersect in at least one index $i$. Since $i\in M_1$, by \Cref{lem:cache-maxreg}, some \t{maxregs[$i$].write(v')} or \t{maxregs[$i$].read() $\rightarrow$ v'} completed before $t$, where $v' \geq v$. Therefore, by the read-read or write-read monotonicity of \t{maxregs[$i$]}, the \readop{} of \t{maxregs[$i$]} after $t$ returns a value $\geq v$, as required by the lemma.
\end{proof} 

\begin{theorem}
  \Cref{alg:wmr} satisfies read-read monotonicity.
\end{theorem}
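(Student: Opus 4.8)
The plan is to reduce read-read monotonicity directly to \Cref{lem:read-after-inner-write}. Consider two \readop{}s $R_1$ and $R_2$ in the execution such that $R_1$ returns $x$, $R_2$ returns $y$, and $R_1$ precedes $R_2$ (i.e.\ $R_1$ completes before $R_2$ is invoked). The goal is to show $x \le y$.

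First I would observe that before $R_1$ returns $x$, it must have executed \t{inner\_write(x)} to completion at \Cref{line:invoke-inner-write}, say at time $t$; here $x$ is precisely the value $R_1$ computed at \Cref{line:cache-max}. Since $R_1$ completes before $R_2$ starts, $R_2$'s entire read phase (\Cref{line:read-for-start}--\Cref{line:read-wait}), during which it reads all \t{maxregs} and waits for a majority of responses, takes place strictly after $t$.

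Next, I would invoke \Cref{lem:read-after-inner-write} with $v = x$: since \t{inner\_write(x)} completed at $t$ and $R_2$ reads from all \t{maxregs} and waits for a majority after $t$, at least one of the responses $R_2$ collects carries a value $\ge x$. Because $R_2$ writes each received response into its cache at \Cref{line:update-cache-r} before computing $y = \max(\t{cache.values})$ at \Cref{line:cache-max}, the corresponding cache slot holds a value $\ge x$ at that moment, and hence $y \ge x$, as desired.

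The argument is short because the lemma already isolates the crucial mechanism---that a completed \t{inner\_write(x)} leaves a majority imprint on the shared \t{maxregs} that any later majority read is guaranteed to observe. The main point requiring care is the timing and ordering bookkeeping: verifying that $R_2$'s read phase lies entirely after the completion of $R_1$'s \t{inner\_write}, and that this holds even when $R_1$ and $R_2$ run at different processes with disjoint caches, since the lemma is stated purely in terms of the shared \t{maxregs} rather than any single process's cache. I would also rely on the standing assumption that each process runs its operations sequentially, so that no concurrent operation at $R_2$'s process overwrites the relevant cache slot between \Cref{line:update-cache-r} and \Cref{line:cache-max}.
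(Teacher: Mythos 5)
Your proof is correct and takes essentially the same route as the paper's: both reduce the claim to \Cref{lem:read-after-inner-write} applied to the \t{inner\_write($x$)} that $R_1$ completes before returning, concluding that $R_2$'s majority read phase observes some value $\geq x$ and hence its cache maximum $y$ satisfies $y \geq x$. The extra timing and per-process bookkeeping you spell out is implicit in the paper's version but adds nothing different in substance.
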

\begin{proof}
  Let $R_1$ and $R_2$ be two \readop{}s such that $R_1$ precedes $R_2$, $R_1$ returns $x$, and $R_2$ returns $y$. We will show that $x \geq y$.

  By \Cref{lem:read-after-inner-write}, since $R_1$ calls \t{inner\_write($x$)} before it returns, at least one of the \readop{}s performed by $R_2$ in lines~\ref{line:read-for-start}--\ref{line:read-wait} returns a value $\geq x$. So the cache of the invoking process of $R_2$ contains at least one value $\geq x$ at the time when it invokes \Cref{line:cache-max}. Thus the return value of $R_2$ is $\geq x$.
  
  \end{proof}

\begin{theorem}
  \Cref{alg:wmr} satisfies write-read monotonicity.
\end{theorem}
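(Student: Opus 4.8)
The plan is to reuse almost verbatim the argument already deployed for read-read monotonicity, exploiting the fact that the lemmas were stated in terms of \t{inner\_write} rather than in terms of the public \readop{}/\writeop{} interface. Let $W$ be the \writeop{} that writes $x$ and precedes the \readop{} $R$ that returns $y$; I want to conclude $x \leq y$. The crucial observation is that the body of \t{WRITE} consists of nothing but a call to \t{inner\_write}, so $W$ completes at exactly the moment its internal \t{inner\_write($x$)} completes; call this time $t$. Because $W$ precedes $R$, every memory access performed by $R$---in particular the \readop{}s of \t{maxregs} in lines~\ref{line:read-for-start}--\ref{line:read-wait}---occurs strictly after $t$.

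First I would invoke \Cref{lem:read-after-inner-write} with $v = x$: since \t{inner\_write($x$)} completes at $t$, any subsequent act of reading all \t{maxregs} and awaiting a majority of responses yields at least one value $\geq x$. Next I would apply this to $R$'s own majority read: as it takes place after $t$, the cache of $R$'s invoking process holds at least one slot $\geq x$ by the time control reaches the \t{max} computation at \Cref{line:cache-max}. Therefore $y = \max(\t{cache.values}) \geq x$, which is exactly the claim. This is the same two-step pattern (lemma $\Rightarrow$ majority read dominates $\Rightarrow$ max dominates) used for read-read monotonicity, with the \t{inner\_write} of $R_1$ simply replaced by the \t{inner\_write} embedded in $W$.

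The only point requiring care---and the one I expect to be the main (albeit mild) obstacle---is the timing justification that $R$'s reads strictly follow $t$. This hinges on reading ``preceding'' in the property as real-time precedence (\,$W$ returns before $R$ is invoked\,), together with the syntactic fact that \t{WRITE($x$)} returns precisely when its \t{inner\_write($x$)} returns, so that $t$ is genuinely the completion time of $W$ and hence earlier than any step of $R$. Once this alignment is made explicit, \Cref{lem:read-after-inner-write} does all the remaining work and no further calculation is needed.
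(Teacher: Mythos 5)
Your proposal is correct and follows essentially the same route as the paper's proof: both invoke \Cref{lem:read-after-inner-write} on the \t{inner\_write($x$)} embedded in $W$, conclude that $R$'s majority read in lines~\ref{line:read-for-start}--\ref{line:read-wait} yields at least one value $\geq x$ in the cache, and hence that the maximum taken at \Cref{line:cache-max} is $\geq x$. Your extra care in pinning down the completion time $t$ of \t{WRITE} and the real-time meaning of ``preceding'' is a harmless elaboration of what the paper leaves implicit.
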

\begin{proof}
  Let $W$ be a \writeop{} and $R$ be a \readop{} such that $W$ precedes $R$, $W$ writes $x$, and $R$ returns $y$. We will show that $x \geq y$.

  By \Cref{lem:read-after-inner-write}, since $W$ calls \t{inner\_write($x$)} before it returns, at least one of the \readop{}s performed by $R$ in lines~\ref{line:read-for-start}--\ref{line:read-wait} returns a value $\geq x$. So the cache of the invoking process of $R$ contains at least one value $\geq x$ at the time when it invokes \Cref{line:cache-max}. Thus the return value of $R$ is $\geq x$. 

  \end{proof}

\begin{theorem}
  \Cref{alg:wmr} satisfies wait-freedom.
\end{theorem}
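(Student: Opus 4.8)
The plan is to show that every invoked \t{WRITE} and \t{READ} of \Cref{alg:wmr} returns after finitely many steps, reducing the claim to the termination of the two blocking waits in the algorithm. I would rely on three facts: (i) each underlying unreliable max register is wait-free, so any operation invoked on a \emph{correct} (non-crashed) slot of \t{maxregs} eventually returns; (ii) by assumption a majority of the \t{maxregs} slots are correct; and (iii) every \t{for} loop ranges over the fixed-length array \t{maxregs}, its body only \emph{issues} asynchronous operations (so it cannot block), and the only remaining work---the \t{max} at \Cref{line:cache-max}---is local and terminating. Since \t{WRITE(v)} merely calls \t{inner\_write(v)}, and \t{READ} is the bounded composition of the wait at \Cref{line:read-wait}, the local \t{max}, and a call to \t{inner\_write} (\Cref{line:invoke-inner-write}), it suffices to prove that the wait at \Cref{line:read-wait} and the wait at \Cref{line:inner-write-wait} both unblock in finite time.

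The wait at \Cref{line:read-wait} is the easy case: \t{READ} issues an asynchronous read to every slot, and by (i)--(ii) the correct majority each return a value, so a majority of answers necessarily arrives and the wait unblocks. The same ``only a majority is needed'' template disposes of this step.

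The crux is the wait at \Cref{line:inner-write-wait} inside \t{inner\_write(v)}, which blocks until a majority of local \t{cache} slots hold values $\geq v$. I would reason over the correct majority $C$: for each $mr \in C$, either \Cref{line:inner-write-if} finds $\t{cache[mr]} \geq v$ and issues nothing, or it issues the \writeop at \Cref{line:update-mr}, which---by (i), since $mr$ is correct---eventually completes and sets $\t{cache[mr]} = v$ at \Cref{line:update-cache-w}. Either way every $mr \in C$ eventually has $\t{cache[mr]} \geq v$; as $C$ is a majority, the guard of \Cref{line:inner-write-wait} is eventually satisfied, matching the postcondition already recorded in \Cref{obs:maj-higher}.

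The main obstacle is to guarantee that these slots are \emph{simultaneously} $\geq v$, i.e. to rule out a slot that was raised to $v$ being pulled back below $v$ before the majority is attained. The dangerous case is a still-in-flight asynchronous read from the enclosing \t{READ} (\Cref{line:update-cache-r}) that completes during \t{inner\_write} and overwrites $\t{cache[mr]}$ with a smaller value. I would close this gap using the monotonicity of the underlying max registers: once \t{inner\_write(v)} has written $v$ to a correct slot $mr$, every read of \t{maxregs[$mr$]} serviced after that write returns a value $\geq v$ (the same monotonicity invoked in \Cref{lem:read-after-inner-write}), so a late read can only raise $\t{cache[mr]}$, while the only reads that could return a smaller value were issued for slot $mr$ before \t{inner\_write} began and are ordered no later than our write at that slot. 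Hence the correct-majority cache entries converge to $\geq v$ and do not regress, the guard of \Cref{line:inner-write-wait} holds simultaneously across a majority, \t{inner\_write} returns, and wait-freedom of both \t{WRITE} and \t{READ} follows.
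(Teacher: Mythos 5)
Your proof follows the same decomposition as the paper's (reduce wait-freedom to the two blocking waits; the wait at \Cref{line:read-wait} unblocks because a correct majority of \t{maxregs} answers), but your argument for the crucial wait at \Cref{line:inner-write-wait} is genuinely different from---and more careful than---the paper's. The paper argues by dichotomy: if a majority of \t{cache} is already $\geq v$, the function returns immediately; otherwise a majority of slots is $< v$, \writeop{}s are issued to all of them, and the wait unblocks ``at the latest when all such \writeop{}s have completed.'' Read literally, that last step silently assumes every register written to is alive: a \writeop{} issued to a crashed register never completes and thus never executes \Cref{line:update-cache-w}, so the paper's sentence does not cover executions in which some of the written-to slots have crashed. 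Your argument---quantify over the correct majority $C$ and show that each $mr \in C$ either already satisfies $\t{cache[mr]} \geq v$ or receives a \writeop{} that completes because $mr$ is correct---is exactly the counting that repairs this, and it buys a proof that is robust to crashes among the written-to slots rather than only among the untouched ones.

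The one step you should present as an added model assumption rather than a derived fact is your closing claim that stale in-flight \readop{}s (issued at \Cref{line:update-cache-r} by the preceding \t{READ}) are ``ordered no later than our write at that slot.'' Monotonicity of \t{maxregs[mr]} constrains the \emph{values} such reads return, not the order in which their \emph{responses} are delivered to $p$: a read serviced before \t{maxregs[mr].write(v)} takes effect can still have its response processed after the write's response, overwriting $\t{cache[mr]}$ with a smaller value, and nothing in the pseudocode ever re-raises that slot. Excluding this requires per-register FIFO delivery of operation responses (true of the paper's RDMA implementation, which uses one in-order queue pair per memory node, but not stated in the abstract model of \Cref{alg:wmr}). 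To your credit, you are the only one who notices this hazard---the paper's proof never considers interference from still-in-flight reads at all---but your resolution of it is an assumption to be declared, not a consequence of \Cref{lem:read-after-inner-write}.
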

\begin{proof}
  We start by showing that the \t{inner\_write} function is wait-free. Let $p$ be the process that invokes it. The only blocking step of the function is \Cref{line:inner-write-wait}. If $p$'s local cache already contains a majority of values $\geq v$, then the function returns immediately, without waiting for any asynchronous operation invoked in the for loop. Otherwise, the local cache contains a majority of values $<v$, so the if statement at \Cref{line:inner-write-if} is triggered for a majority of indices. Thus, $v$ will be written asynchronously to a majority of locations in $p$'s cache. At the latest when all such \writeop{}s have completed, the wait statement at \Cref{line:inner-write-wait} and the function will return.

  The \writeop{} operation is a mere call to \t{inner\_write}, so it is clearly wait-free as well.

  The \readop{} operation has an additional blocking step at \Cref{line:read-wait}. This step completes once a majority of the max registers have been read from, which cannot be blocked or prevented. Then, the \readop{} operation invokes \t{inner\_write}, which is wait-free and thus must eventually return. So the \readop{} operation is wait-free as well.
\end{proof} 

\subsection{Roundtrip Complexity}\label{sec:wmr-complexity}

We assume that a \writeop to, or \readop from, one of the underlying unreliable max registers takes 1 roundtrip time (RTT) (this is true in our implementation). Then, the \t{inner\_write} function returns in at most 1 RTT, since all of the \writeop{}s in \Cref{line:update-mr} are performed asynchronously in parallel. Furthermore, if the \t{cache} already contains $v$, or a larger value, in a majority of slots, then \t{inner\_write} returns immediately, in 0 RTTs, even if the if statement at \Cref{line:inner-write-if} is triggered for a minority of cache slots. This is because the resulting \writeop{}s to \t{maxregs} are performed in the background and do not block \t{inner\_write} from returning. Since the \t{WRITE} function is just a wrapper around \t{inner\_write}, all of the above is also true about \t{WRITE}.

The \t{READ} function always requires 1 RTT to complete lines~\ref{line:read-for-start}--\ref{line:read-wait}, after which it calls the \t{inner\_write} function which, as discussed above, may incur 0 or 1 RTTs. Overall, \t{READ} incurs 1 or 2 RTTs. However, there are two common-case scenarios in which \t{READ} requires only 1 RTT:
\begin{itemize}
  \item Under low contention, we can assume that \writeop{}s are not concurrent with any other operation, and that enough time passes between a \t{WRITE(v)} and a subsequent \t{READ} such that the background \writeop{}s of \t{v} to \t{maxregs} complete before the \readop{} begins. 
  Then, when the \t{READ} performs the \readop{}s at lines~\ref{line:read-for-start}--\ref{line:read-wait}, all such \readop{}s will return \t{v}, and thus will populate a majority of the local cache with $v$. Therefore, the \t{inner\_write} at \Cref{line:invoke-inner-write} will return in 0 RTTs.
  \item It is often the case that, among a collection of machines, some are consistently more responsive than others. If this is the case for the memories hosting each max register in \t{maxregs}, then all operations will receive responses first from the same majority of \t{maxregs} for writing and reading (lines~\ref{line:write-for-start}--\ref{line:inner-write-wait} and lines~\ref{line:read-for-start}--\ref{line:read-wait}). If this occurs, and there is low contention (\writeop{}s are not concurrent with any other operation), then \readop{}s always return in 1 RTT, even if operations occur one right after the other, without any background operations completing in the meantime.
\end{itemize} 

In ABD and \protocol, when a \wmr is read to obtain a fresh timestamp as part of a \writeop{} (Algorithms~\ref{alg:abd} and \ref{alg:sg-write}, respectively), read-read monotonicity is unnecessary.
It is then enough to use a weaker \readop{} operation that does not call \t{inner\_write(v)} and always completes in 1 RTT. Our proof does not cover this optimization.
\section{Timestamp Lock}\label{app:timestamp}

\begin{lstlisting}[caption={Timestamp Lock Pseudocode},label={app:split}]
CASes = {(@$\bot$@, @$\bot$@), ...} // 2f+1 CAS Objects @\label{app:split:CASs}@

def TRYLOCK(ts, mode: READ or WRITE):
  read: dict<CAS, CasValue> = {(@$\bot$@, @$\bot$@), ...}
  async for c in CASes: @\label{app:split:async}@
    while read[c].ts < ts: @\label{app:split:while}@
      expected = read[c]
      read[c] = c.CAS(expected, (ts, mode))@\label{app:split:cas}@
      if read[c] == expected: break@\label{app:split:break}@
  wai@$ $@t fo@$ $@r a majority to complete @\label{app:split:wait}@
  if any c st read[c].ts > ts:@\label{app:split:any-greater}@ return False @\label{app:split:return-1}@
  if any c st read[c] == (ts, @$\neg$@mode):@\label{app:split:any-other-side}@ return False@\label{app:split:return-2}@
  return True @\label{app:split:return-3}@
\end{lstlisting}

\paragraph{Interface}
Timestamp locks have a single operation:

\begin{itemize}
  \item \t{TRYLOCK(ts, mode: READ or WRITE)} $\rightarrow$ \t{bool} 
\end{itemize}

\paragraph{Properties}

Timestamp locks satisfy three properties:
\begin{itemize}
  \item \true safety: if a lock operation \t{TRYLOCK(ts,mode)} returns $r$ at time $t$, and no operation \t{TRYLOCK(ts',mode')} was invoked before $t$ such that (i) $ts' > ts$, or (ii) $ts'=ts$ and $mode' = \neg mode$, then $r = \true$. 
  
  \item \true exclusion: \t{TRYLOCK(ts,m)} and \t{TRYLOCK(ts,$\neg$m)} cannot both return \true.
  
  \item Liveness (wait-freedom): Every operation eventually returns.
\end{itemize}

\subsection{Correctness}
\begin{theorem}
  \Cref{app:split} satisfies \true safety.
\end{theorem}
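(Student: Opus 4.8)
The plan is to prove the contrapositive: assuming \t{TRYLOCK(ts,mode)} returns \false at time $t$, I will exhibit an operation \t{TRYLOCK(ts',mode')} that was invoked before $t$ and that satisfies (i) $ts' > ts$, or (ii) $ts' = ts$ and $mode' = \neg mode$. Since a \t{TRYLOCK} can only return \false at \Cref{app:split:return-1} or \Cref{app:split:return-2}, it suffices to handle these two exit points; every other terminating path reaches \Cref{app:split:return-3} and returns \true.

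The key enabling observation, which I would isolate as a short lemma, is a provenance claim about the local dictionary \t{read}: whenever \t{read[c]} holds a non-$\bot$ value $(ts',mode')$ at some point during the execution, that value was physically stored in the CAS object $c$ at an earlier time and was written there by some invocation \t{TRYLOCK($ts'$,$mode'$)}. This follows directly from the code rather than from any quorum or monotonicity argument: \t{read[c]} is only ever assigned the return value of a \casop on $c$ (\Cref{app:split:cas}), i.e.\ the value $c$ held immediately before that \casop; and a CAS object is mutated solely by \t{TRYLOCK} on \Cref{app:split:cas}, where the written pair is exactly the caller's own argument $(ts,mode)$. Consequently, observing a non-$\bot$ value $(ts',mode')$ in \t{read[c]} certifies that the matching invocation \t{TRYLOCK($ts'$,$mode'$)} began before the \casop that read it, hence before $t$.

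With this lemma the two cases are immediate. If the call returns \false at \Cref{app:split:return-1}, then \t{read[c].ts} $> ts$ for some $c$; since the dictionary starts at $(\bot,\bot)$ and $\bot$ lies below every real timestamp, this entry came from a \casop return, so by the lemma some \t{TRYLOCK($ts'$,$\star$)} with $ts' > ts$ was invoked before $t$, a conflicting operation of type (i). If instead the call returns \false at \Cref{app:split:return-2}, then \t{read[c]} $=(ts,\neg mode)$ for some $c$; as the caller only ever writes $(ts,mode)$, this value must have been written by \t{TRYLOCK($ts$,$\neg mode$)}, which the lemma places before $t$, a conflicting operation of type (ii). Note that the complementary non-conflicting situation---\t{read[c].ts} $= ts$ with the \emph{same} mode---triggers neither check, matching the specification's allowance for multiple same-mode holders.

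I expect the main obstacle to be purely one of careful bookkeeping rather than deep reasoning: making the temporal argument airtight (that the provenance \casop, and thus the conflicting invocation, strictly precedes $t$), and confirming that the quantifier ``any $c$'' in the two checks may safely range over the non-completing CAS objects as well, since their \t{read[c]} entries are likewise only ever populated by genuine \casop return values. Crucially, \true safety needs neither the per-object timestamp monotonicity enforced by the guard on \Cref{app:split:while} nor the quorum-intersection argument; those are the heavier tools reserved for \true exclusion, so this proof should remain short.
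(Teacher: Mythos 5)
Your proof is correct and takes essentially the same approach as the paper, just stated in the contrapositive: your provenance lemma---that \t{read[c]} entries come only from \casop returns and that CAS objects are only ever populated with the exact arguments of \t{TRYLOCK} invocations---is precisely the chain of observations the paper uses to conclude that, absent any conflicting invocation, neither \false-returning check can fire. The difference is only the direction of the implication, not the substance of the argument.
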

\begin{proof}
  Let us define a \t{TRYLOCK(ts',mode')} operation to be \textit{$(ts,mode,t)$-conflicting} if it is invoked before time $t$ and either (i) $ts'>ts$ or (ii) $ts'=ts$ and $mode' = \neg mode$. Thus, the theorem can be stated as: if a lock operation \t{TRYLOCK(ts,mode)} returns $r$ at time $t$, and no $(ts,mode,t)$-conflicting operation exists, then $r = \true$.

  Let $D$ be a \t{TRYLOCK(ts,m)} operation that returns at time $t$ and assume no $(ts,mode,t)$-conflicting operation exists. We will show that $D$ must return \true. Since no $(ts,mode,t)$-conflicting operation exists, by time $t$ none of the shared CAS objects contains a tuple $(ts',mode')$ such that $ts'>ts$, or $ts'=ts$ and $mode' = \neg mode$, since a CAS object is populated with $(ts,mode)$ only at \Cref{app:split:cas} and only if $(ts,mode)$ are the arguments to \t{TRYLOCK}. Thus also none of the fields of the \t{read} object in $D$ can contain such a tuple either by time $t$, since this \t{read} array is only populated with values from the CAS objects (\Cref{app:split:cas}). It then follows that neither of the if statements at \Cref{app:split:any-greater} or \Cref{app:split:any-other-side} is triggered, so $D$ must return \true at \Cref{app:split:return-3}.
\end{proof}

\begin{observation}\label{obs:cas-increasing}
  In \Cref{app:split},
  CAS objects
  never decrease in value. Furthermore,
  CAS objects
  never change their contents from $(ts,mode)$ to $(ts,\neg{mode})$ for any $ts,mode$.
\end{observation}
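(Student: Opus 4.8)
The plan is to observe that the contents of a CAS object can change only through a successful compare-and-swap at \Cref{app:split:cas}, so it suffices to show that every successful CAS strictly increases the stored value. First I would fix a CAS object $c$ and recall that $c$'s value is a tuple $(\t{ts},\t{mode})$ ordered lexicographically with the timestamp as the dominant field (with $\bot$ treated as the minimal timestamp, matching the initial value $(\bot,\bot)$). A failed CAS leaves $c$ untouched, so only successful CASes need attention.

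The key step is to tie the while-loop guard to the \t{expected} argument of each CAS. Inside the loop of \Cref{app:split:while}, the guard forces $\t{read[c].ts} < \t{ts}$, and \t{expected} is set to the current \t{read[c]} on the line immediately preceding the CAS, with no update in between; hence every CAS issued on \Cref{app:split:cas} uses an \t{expected} whose timestamp is strictly below \t{ts}. Invoking the atomic semantics of \t{CAS(x,y)} --- it succeeds exactly when the object's physical value equals $x$ at the instant of the swap --- a successful CAS therefore replaces a value of timestamp $<\t{ts}$ with the value $(\t{ts},\t{mode})$ of timestamp exactly \t{ts}. Since the timestamp is the dominant field of the lexicographic order, this is a strict increase irrespective of the two \t{mode} fields, which establishes monotonicity (the first part of the observation).

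The second part then follows immediately: because every change to $c$ strictly increases its timestamp, a transition out of $(\t{ts},\t{mode})$ must land on a tuple whose timestamp is strictly greater than \t{ts}, so it can never equal $(\t{ts},\neg\t{mode})$, which shares the timestamp \t{ts}. I expect the main obstacle to be the middle paragraph's use of atomicity: one must be careful that the guard constrains only the \emph{local} copy \t{read[c]}, whereas the conclusion concerns the \emph{physical} contents of $c$. Bridging this gap is exactly where the semantics of \t{CAS} enter --- the swap takes effect only when the physical value coincides with $\t{expected}=\t{read[c]}$, whose timestamp the guard has already pinned below \t{ts}. Once this is made explicit, the remainder is routine.
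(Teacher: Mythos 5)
Your proof is correct and takes essentially the same approach as the paper: both arguments hinge on the fact that CAS objects change only via the CAS at \Cref{app:split:cas}, and that the while-loop guard pins $\t{expected} = \t{read[c]}$ to a timestamp strictly below \t{ts}, so every successful CAS strictly increases the stored timestamp. You are in fact slightly more explicit than the paper in deriving the second part (that no transition from $(ts,mode)$ to $(ts,\neg mode)$ can occur) from the strict \emph{timestamp} increase rather than mere lexicographic increase, a distinction the paper's proof leaves implicit.
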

\begin{proof}
  CAS objects
  are only
  updated at \Cref{app:split:cas}.
  Take a CAS object $c$.
  At \Cref{app:split:cas}, \t{c.CAS(expected, (ts, mode))} is successful only if (1) $c$ already contains \t{expected}, which is equal to \t{read[c]}, and (2) $ts > \t{read[c]}$. Thus, $c$ takes strictly monotonically increasing values over time.
\end{proof}

\begin{theorem}
  \Cref{app:split} satisfies \true exclusion.
\end{theorem}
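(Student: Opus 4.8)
The plan is to reduce \true exclusion to a quorum-intersection argument built on one key lemma: \emph{whenever \t{TRYLOCK(ts,mode)} returns \true, a majority of the CAS objects each held the tuple $(ts,mode)$ at some instant during that invocation.} Granting this lemma, I would argue by contradiction. Suppose both \t{TRYLOCK(ts,m)} and \t{TRYLOCK(ts,$\neg$m)} return \true. The lemma yields a majority $M_1$ of CAS objects that transiently held $(ts,m)$ and a majority $M_2$ that transiently held $(ts,\neg m)$. Since there are $2f+1$ CAS objects (\Cref{app:split:CASs}), any two majorities intersect, so some object $c \in M_1 \cap M_2$ held both $(ts,m)$ and $(ts,\neg m)$, necessarily at two distinct times. \Cref{obs:cas-increasing} rules this out: $c$ never changes its contents from $(ts,mode)$ to $(ts,\neg mode)$, and more strongly its timestamp only ever increases strictly on an update, so once $c$ holds a value with timestamp $ts$ it can never later hold a different value with the same timestamp $ts$. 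This contradiction establishes \true exclusion.

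The bulk of the work is the key lemma. I would fix an object $c$ in the majority whose loop completed before the wait at \Cref{app:split:wait}, and split on how its loop terminated. If the loop broke at \Cref{app:split:break}, the CAS at \Cref{app:split:cas} succeeded, writing $(ts,mode)$ into $c$; hence $c$ held $(ts,mode)$ right after that step. Otherwise the loop exited because its guard at \Cref{app:split:while} became false, so $\t{read[c]}$ equals the value returned by a failed CAS and satisfies $\t{read[c].ts} \geq ts$. Because the call returned \true, neither final check fired: \Cref{app:split:any-greater} not firing gives $\t{read[c].ts} \leq ts$, which combined with the previous bound forces $\t{read[c].ts} = ts$; and \Cref{app:split:any-other-side} not firing gives $\t{read[c]} \neq (ts,\neg mode)$. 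Together these yield $\t{read[c]} = (ts,mode)$, and since \t{read[c]} was read from $c$, object $c$ held $(ts,mode)$ at that read. Both cases conclude that $c$ held $(ts,mode)$ at some instant, and this holds for the whole completing majority, proving the lemma.

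I expect the main obstacle to be the case analysis inside the lemma, precisely because the two loop-exit cases witness $(ts,mode)$ through different parts of the state. In the break case the informative fact is about $c$'s own contents immediately after a successful CAS, while $\t{read[c]}$ still holds the stale \t{expected} value with a smaller timestamp; in the non-break case the witness is carried by $\t{read[c]}$ itself. Getting both branches to land on the identical conclusion ``$c$ held $(ts,mode)$,'' and then feeding that uniform statement into the monotonicity argument of \Cref{obs:cas-increasing}, is the delicate step; the quorum intersection and the final contradiction are then routine.
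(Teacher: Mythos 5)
Your proof is correct and takes essentially the same route as the paper's: assume both calls return \true, intersect the two majorities of completed for-loop iterations, and invoke \Cref{obs:cas-increasing} to rule out the common CAS object ever having held both $(ts,m)$ and $(ts,\neg m)$. Your explicit key lemma, with its break-versus-guard case split, is precisely the step the paper compresses into the one-line assertion that $D_1$ ``must observe $m$ to contain $(ts,mode)$''; your rendering is in fact slightly more careful, since in the break case the witness is the object's contents immediately after the successful CAS rather than the value left in \t{read[c]}.
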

\begin{proof}
  Assume by contradiction that there exist two operations $D_1 = \t{TRYLOCK(ts,mode)}$ and $D_2 = \t{TRYLOCK(ts,$\neg$mode)}$ such that $D_1$ and $D_2$ both return \true.

  Since $D_1$ returns, a majority of iterations of the for loop at \Cref{app:split:async} must complete. Denote by $M_1$ the set of
  CAS objects
  whose iterations complete. Similarly for $D_2$; denote by $M_2$ the set of
  CAS objects
  whose iterations complete at $D_2$. $M_1$ and $M_2$ must intersect in at least one
  CAS object $m$.
  $D_1$ must observe $m$ to contain $(ts, mode)$, otherwise $D_1$ would return \true at \Cref{app:split:return-1} or \Cref{app:split:return-2}. Similarly, $D_2$ must observe $m$ to contain $(ts,\neg mode)$. Thus, $m$ must change in value from $(ts,\neg mode)$ to $(ts, mode)$ or vice-versa, possibly passing through one or more intermediate values. By \Cref{obs:cas-increasing}, $m$ cannot pass directly from $(ts, mode)$ to $(ts,\neg mode)$ or vice-versa. Furthermore, also by \Cref{obs:cas-increasing}, if $m$ changes from $(ts,\cdot)$ to some other value $(ts',\cdot)$, then $ts'>ts$, after which $m$ will never again contain $(ts,\cdot)$. We have reached a contradiction in all cases, thus the theorem must hold.
\end{proof}

\begin{theorem}
  \Cref{app:split} satisfies wait-freedom.
\end{theorem}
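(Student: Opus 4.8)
The plan is to show that a single invocation of \t{TRYLOCK(ts,mode)} returns after a finite number of its own steps, irrespective of how other processes are scheduled. There are only two places where the operation can block: the per-object \t{while} loops spawned by the \t{async for} at \Cref{app:split:async}, and the \t{wait for a majority to complete} at \Cref{app:split:wait}. Everything after \Cref{app:split:wait} is a finite, non-blocking scan of the local \t{read} array followed by a \t{return}, so it suffices to discharge these two blocking points. I would handle \Cref{app:split:wait} first: by assumption a majority of the CAS objects never fail, and each \casop{} issued against a live object returns in finite time. Hence, provided every \t{while} loop against a live object terminates, a majority of the \t{async for} bodies complete, which is exactly the condition \Cref{app:split:wait} waits for; the (minority of) faulty objects whose loops may hang are never waited upon. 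This reduces wait-freedom to a single claim: the \t{while} loop at \Cref{app:split:while} terminates after finitely many iterations at every live CAS object.

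For that claim I would reason as follows. Each iteration performs exactly one \t{c.CAS} at \Cref{app:split:cas}, which returns in finite time on a live object, so it is enough to bound the number of iterations. The loop exits in one of two ways: either the \casop{} succeeds, so $\t{read[c]} = \t{expected}$ and the \t{break} at \Cref{app:split:break} fires; or the loop guard $\t{read[c].ts} < \t{ts}$ becomes false. Suppose, for contradiction, that the loop runs forever. Then it performs infinitely many \emph{failed} CASes, and on each failure \t{read[c]} is set to the value currently stored in $c$. By \Cref{obs:cas-increasing} the contents of $c$ are strictly monotonically increasing, so each failed CAS reads a value strictly greater than the previous one; moreover, since the loop never exits through its guard, all of these values have timestamp strictly below \t{ts}. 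Thus $c$ would be advanced, by successful CASes of \emph{other} processes, through infinitely many distinct values all lying strictly below \t{ts}.

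The main obstacle is ruling out this infinite interference, and it is exactly where the bounded-concurrency hypothesis is needed. Every value written to $c$ with timestamp below \t{ts} is installed by some concurrent \t{TRYLOCK(ts',$\cdot$)} with $\t{ts'} < \t{ts}$, and by \Cref{obs:cas-increasing} no value of $c$ is ever revisited, so these installations correspond to distinct timestamps strictly below \t{ts}. Because the number of concurrent threads is bounded and the timestamps each thread issues are drawn from a discrete, strictly increasing domain, only finitely many distinct timestamps below \t{ts} can ever be installed while our operation is pending. This contradicts the existence of infinitely many such advancements, so the \t{while} loop terminates after a bounded number of iterations. Combining this with the reduction above, a majority of the \t{async for} bodies complete, \Cref{app:split:wait} unblocks, and the operation returns; hence \Cref{app:split} is wait-free. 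I expect the only delicate point to be making the ``finitely many distinct timestamps below \t{ts}'' step airtight, since naively a strictly increasing sequence of timestamps could accumulate below \t{ts}; this is precisely ruled out by the discreteness of the guessed-timestamp domain together with the bound on concurrent threads.
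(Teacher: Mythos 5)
Your proof is correct and follows essentially the same route as the paper's: reduce wait-freedom to termination of the \t{while} loop at each correct (live) CAS object, use \Cref{obs:cas-increasing} to show that \t{read[c]} strictly increases on every failed \casop{}, and conclude that only finitely many values below \t{ts} can be traversed before the loop guard fails or the \t{break} fires, after which a majority of the \t{async for} bodies complete and \Cref{app:split:wait} unblocks. The one difference is that your final step invokes the bounded-concurrency assumption, which is redundant: since timestamps are positive integers, there are at most $\t{ts}+1$ possible values below \t{ts}, so the strictly increasing sequence \t{read[c]} must exit the loop after finitely many iterations no matter how many threads interfere---discreteness alone closes the argument, exactly as the paper does.
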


\begin{proof}
  The only blocking step of the \t{TRYLOCK} operation is at \Cref{app:split:wait}. To prove that this step eventually completes, we show that a majority of the iterations of the for loop eventually complete. 

  Let $c$ be a correct CAS object,
  i.e., one that never fails. We show that at each iteration of the while loop, either (a) the CAS operation succeeds, or (b) the contents of \t{read[c]} strictly increase. Assume the CAS operation fails and consider two cases:
  \begin{itemize}
  \item \t{read[c] = $\bot$}. Since the CAS fails, \t{read[c]} becomes
  the current value of $c$, which cannot be equal to $\bot$, otherwise the CAS would succeed. 
  Since any value is greater than $\bot$, \t{read[c]} strictly increases in value.

  \item \t{read[c] $\ne\bot$}. In this case, \t{read[c]} must be equal to the contents of $c$ from a previous iteration of the while loop. After the CAS operation, \t{read[c]} becomes equal to the current contents of $c$. By \Cref{obs:cas-increasing}, the contents of $c$ can only strictly increase in value, thus \t{read[c]} also strictly increases in value in this case.
  \end{itemize}
  
  We have shown that at each iteration of the while loop, either (a) the CAS operation succeeds, or (b) the contents of \t{read[c]} strictly increase. In case (a), the while loop will exit immediately afterwards, due to the check at \Cref{app:split:break}. In case (b), there can only be a finite number of such iterations, since $ts$ is a fixed value and thus \t{read[c]} can only take a finite number of values before it becomes greater than $ts$, at which point the while loop, and corresponding for loop iteration, complete.

  We have shown that for each correct CAS object, the corresponding for loop iteration eventually completes. Since we assume there are a majority of such correct CAS objects in the system, it follows that a majority of the iterations complete. Thus, each \t{TRYLOCK} operation eventually completes.
\end{proof}

\subsection{Roundtrip Complexity}
We assume that performing a CAS on an underlying CAS object incurs a single RTT. Thus, each iteration of the while loop at \Cref{app:split:while} takes 1 RTT. Therefore, the round complexity of a \t{TRYLOCK} call is equal to the highest number of iterations of the while loop performed before a majority of iterations of the for loop have completed. 

Fix a CAS object \t{c}. At the start of the \t{TRYLOCK} call, \t{read[c] = $\bot$}. In the worst case, \t{read[c]} takes a new value between $\bot$ and \t{ts} every time the CAS at \Cref{app:split:cas} is attempted. Since timestamps can only take positive, integer values, this means that at most $\t{ts} + 1$ iterations can be performed before \t{c} takes a value $\geq \t{ts}$. Therefore, the worst case complexity of \t{TRYLOCK(ts, mode)} is $\t{ts} + 1$. 
\section{\protocol}\label{app:safeguess}

\begin{lstlisting}[caption={\protocol{}' Pseudocode},label={app:alg:sg},float]
M = ((0, @$\bot$@), VERIFIED, @$\bot$@) // Max Register@\label{app:alg:sg-write:M}@
TSL[tid] = {} // Timestamp Lock @\label{app:alg:sg-write:S}@

def WRITE(v):
  w = (guessTs(), GUESSED, v) @\label{app:line:sg-write:guess-ts}@
  i@$ $@n parallel {m = M.READ(), M.WRITE(x)} @\label{app:line:sg-write:parallel}@
  if m <= w: // Fast path @\label{app:line:sg-write:fp}@
    i@$ $@n bg: M.WRITE(w with VERIFIED) // Spdup reads @\label{app:line:sg-write:bg-write}@
  else: // Slow path (potentially stale timestamp) @\label{app:line:sg-write:sp}@
    if TSL[tid].TRYLOCK(w.ts, WRITE): @\label{app:line:sg-write:split}@
      M.WRITE(((m.ts.i+1,tid), VERIFIED, v)) @\label{app:line:sg-write:verified}@

def READ():
  seen: dict<ThreadId, MValue> = {} @\label{app:line:sg-read:dict}@
  while True: @\label{app:line:sg-read:loop}@
    m = M.READ() @\label{app:line:sg-read:read}@
    if m is VERIFIED: return m.v // Fast path @\label{app:line:sg-read:fp}@
    if m in seen.values: // Fresh timestamp @\label{app:line:sg-read:fresh}@
      if TSL[m.ts.tid].TRYLOCK(m.ts, READ):@\label{app:line:sg-read:split}@
        i@$ $@n bg: M.WRITE(m with VERIFIED) // Spdup rds @\label{app:line:sg-read:write}@
        return m.v @\label{app:line:sg-read:return-1}@
    elif m.ts.tid in seen.keys: // Wait-free path@\label{app:line:sg-read:in-seen}@
      return seen[m.ts.tid].v @\label{app:line:sg-read:return-2}@
    seen[m.ts.tid] = m @\label{app:line:sg-read:update-seen}@
\end{lstlisting}

\protocol implements a wait-free atomic multi-writer multi-reader register, with the standard read/write interface and usual correctness properties: linearizability and wait-freedom.

\subsection{Linearizability}
We prove the linearizability of \protocol by describing an explicit linearization for each execution of \protocol and then showing that the linearization is valid: (1) the linearization is a legal sequential execution of a register, (2) the linearization is equivalent to the concurrent execution from which it is derived, and (3) the linearization preserves the real-time ordering of the concurrent execution.

\subsubsection{Preliminaries}

We define the notion of \textit{timestamp of a \writeop}. 
\begin{definition}\label{def:write-timestamp}
  We associate each \writeop $W$ with a timestamp $TS(W)$, in the following way:
  \begin{itemize}
  \item If $W$ performs the \writeop at \Cref{app:line:sg-write:verified}, then $TS(W)$ is the timestamp computed by $W$ in \Cref{app:line:sg-write:verified}. We also call this the verified timestamp of $W$.
  \item Otherwise, then $TS(W)$ is the timestamp guessed by $W$ in \Cref{app:line:sg-write:guess-ts}. We also call this the guessed timestamp of $W$.
\end{itemize} 
\end{definition}

We also rely on a number of observations, assumptions and lemmas in our proof of linearizability. We start with the following observation, which follows from the fact that \readop{s} only return values read from \t{M}, and \writeop{}s to \t{M} are always input values of \writeop{s}. 
\begin{observation}\label{obs:read-from-write}
  In \Cref{app:alg:sg}, the return value of a \readop{} operation is the input value of a \writeop{} operation, or the initial value $\bot$.
\end{observation}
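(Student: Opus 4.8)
The plan is to trace every value a \readop{} can return back to the contents of the shared max register \t{M}, and then to characterize those contents. First I would observe that all three return statements of \readop{} emit the value field of a tuple read from \t{M}: \Cref{app:line:sg-read:fp,app:line:sg-read:return-1} return \t{m.v} for the tuple \t{m} freshly obtained at \Cref{app:line:sg-read:read}, while \Cref{app:line:sg-read:return-2} returns \t{seen[m.ts.tid].v}, where every entry of \t{seen} is a tuple read from \t{M}, since \t{seen} is populated only at \Cref{app:line:sg-read:update-seen}. Thus the observation reduces to the claim that the value field of any tuple returned by \t{M.READ()} is either $\bot$ or the input of some \writeop{}.

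To establish this claim I would invoke the validity property of the reliable max register (Appendix~\ref{app:wmr}): any tuple returned by \t{M.READ()} is either \t{M}'s initial value or was the argument of a prior \t{M.WRITE()}. The initial value carries $\bot$ in its value field, so it remains to inspect the \t{M.WRITE()} call sites. The three writes inside \writeop{} (\Cref{app:line:sg-write:parallel,app:line:sg-write:bg-write,app:line:sg-write:verified}) each store a tuple whose value field is exactly the input \t{v} of that \writeop{}. The remaining write, at \Cref{app:line:sg-read:write} inside \readop{}, stores \t{m with VERIFIED}, i.e., a tuple whose value field equals \t{m.v} for a tuple \t{m} that was itself previously read from \t{M}.

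The main obstacle is precisely this last write: it re-propagates a value read from \t{M} rather than a fresh \writeop{} input, so the claim is mildly self-referential rather than immediate. I would resolve it by considering the earliest time at which \t{M} holds a tuple whose value field is neither $\bot$ nor a \writeop{} input (mirroring the earliest-time argument used in the validity proof of Appendix~\ref{app:wmr}), or equivalently by induction over the sequence of \t{M.WRITE()} steps. The writes originating in \writeop{} cannot produce such a tuple, since their value field is a \writeop{} input by construction; and the write at \Cref{app:line:sg-read:write} merely copies the value field of some \t{m} read strictly earlier, which by minimality (or the induction hypothesis) is already $\bot$ or a \writeop{} input, while the \t{VERIFIED} flag leaves that field untouched. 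Hence no earliest violating time can exist, the reduced claim holds at all times, and combining it with the reduction of the first paragraph yields the observation.
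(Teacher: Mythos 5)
Your proposal is correct and takes essentially the same route as the paper, which justifies this observation in a single sentence: \readop{}s only return values read from \t{M}, and \writeop{}s to \t{M} always carry \writeop{} input values. Your version is in fact more careful than the paper's, since you notice that the reader write-back at \Cref{app:line:sg-read:write} makes the claim self-referential and you close that gap with an earliest-violation/induction argument, a subtlety the paper's one-liner silently glosses over.
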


Furthermore, we make the (reasonable) assumption that a process guesses for each \writeop{} a higher timestamp than it used for all previous \writeop{}s. If the verified timestamp is computed, it is even higher, thus higher than all previous timestamps used by this process.
\begin{assumption}\label{ass:monotone-ts}
  In \Cref{app:alg:sg}, if $W_1$ and $W_2$ are two \writeop{}s by the same process such that $W_1$ precedes $W_2$, then $TS(W_1) < TS(W_2)$.
\end{assumption}

We also make the standard assumption~\cite{herlihy1990-linearizability} that each thread has at most one outstanding operation at a time.
\begin{assumption}\label{ass:one-outstanding}
  In any valid execution, a thread $T$ invokes a new operation only after $T$'s current operation completes.
\end{assumption}

The next observation follows from Assumptions~\ref{ass:monotone-ts} and \ref{ass:one-outstanding}, and from the fact that timestamps include the process identifier.
\begin{observation}
  In \Cref{app:alg:sg}, no two \writeop{}s have the same timestamp, in the sense of \Cref{def:write-timestamp}.
\end{observation}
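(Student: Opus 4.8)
The plan is to prove distinctness by a two-case split on whether the two \writeop{}s are issued by the same process. Before splitting, I would record two elementary facts. First, by \Cref{def:write-timestamp}, $TS(W)$ is either the guessed timestamp from \Cref{app:line:sg-write:guess-ts} or the verified timestamp from \Cref{app:line:sg-write:verified}; in both cases the timestamp is a lexicographically-ordered pair whose second component is the writing process's identifier $tid$ (the guessed timestamp inherits $tid$ from \t{guessTs}, and the verified timestamp is explicitly of the form $(m.ts.i+1, tid)$). Second, distinct processes carry distinct identifiers. These two facts are all the structural input the argument needs.

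For the first case, suppose $W_1$ and $W_2$ originate from distinct processes $p_1 \neq p_2$. Then their timestamps have the form $(\cdot, tid_1)$ and $(\cdot, tid_2)$ with $tid_1 \neq tid_2$, so under lexicographic comparison $TS(W_1) \neq TS(W_2)$ regardless of the integer first components. For the second case, suppose $W_1$ and $W_2$ are issued by the same process $p$. By Assumption~\ref{ass:one-outstanding}, $p$ has at most one outstanding operation at a time, so $W_1$ and $W_2$ cannot be concurrent; one precedes the other, say $W_1$ precedes $W_2$. Assumption~\ref{ass:monotone-ts} then yields $TS(W_1) < TS(W_2)$, so again the timestamps differ. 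Since the two cases are exhaustive, no two \writeop{}s share a timestamp.

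The argument is almost entirely bookkeeping, so the only thing to get right — more a box to tick than a genuine obstacle — is confirming the two ingredients line up cleanly. Specifically, I would verify that Assumption~\ref{ass:monotone-ts} covers \emph{both} flavours of $TS$ (it does: its hypothesis and conclusion are phrased in terms of $TS(\cdot)$ directly, and the accompanying remark notes that a computed verified timestamp is even higher), and that converting same-process program order into the real-time precedence required by Assumption~\ref{ass:monotone-ts} is legitimate precisely because Assumption~\ref{ass:one-outstanding} forbids concurrent same-process operations. With those two points pinned down, the observation follows immediately.
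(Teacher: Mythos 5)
Your proof is correct and follows essentially the same route as the paper, which justifies the observation in one sentence by citing Assumptions~\ref{ass:monotone-ts} and \ref{ass:one-outstanding} together with the fact that timestamps embed the process identifier — exactly the two-case split (different processes via the $tid$ component, same process via precedence plus monotonicity) that you spell out in detail.
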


We next prove the following technical lemma about our reliable max register, which will help us reason about repeated \readop{s} of \t{M} performed in different iterations of the \t{READ} function's while loop.
\begin{lemma}[Double-read lemma]\label{lem:double-read}
  Let $M$ be a Reliable \wmr, and $a < b$ two values. Let $R_1$ and $R_2$ be two \readop{}s from $M$ such that $R_1$ precedes $R_2$ and both $R_1$ and $R_2$ return $a$. Then, no \writeop{} of $b$ precedes the first invocation of a \writeop{} of $a$.
\end{lemma}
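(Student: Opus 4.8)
The plan is to argue by contradiction, with \emph{write-read monotonicity} of the reliable \wmr{} as the single engine of the proof; the role of the \emph{second} read $R_2$ is precisely to turn a possibly-concurrent \writeop{} of $b$ into one that strictly precedes a read, so that write-read monotonicity can be invoked.

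Concretely, I would suppose that some \writeop{} of $b$, call it $W_b$, completes before the earliest-invoked \writeop{} of $a$, which I call $W_a$. The first step is to pin down $W_a$ in time relative to the reads. Since $R_1$ returns $a$, validity of the \wmr{} guarantees that some \writeop{} of $a$ was invoked before $R_1$ returns; as $W_a$ is the earliest-invoked such operation, $W_a$ is invoked before $R_1$ returns. (This is the only place that uses $R_1$, and it tacitly needs $a \neq \bot$; if $a = \bot$ there need be no \writeop{} of $a$ and the statement is vacuously true.)

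The second step chains the orderings into a strict precedence of $W_b$ over $R_2$. By hypothesis $W_b$ completes before $W_a$ is invoked; by the first step $W_a$ is invoked before $R_1$ returns; and since $R_1$ precedes $R_2$, $R_1$ returns before $R_2$ is invoked. Composing these three facts, $W_b$ completes before $R_2$ is invoked, i.e.\ $W_b$ precedes $R_2$. The final step applies write-read monotonicity to the preceding \writeop{} $W_b$ (which writes $b$) and the read $R_2$ (which returns $a$), yielding $b \le a$ and contradicting $a < b$.

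I expect the only real subtlety---and the reason the lemma is stated with two reads---to be the temporal bookkeeping in the second step: a single read $R_1$ does not suffice, because $W_b$ may finish after $R_1$ is invoked (it is only known to finish before $W_a$, which may itself be concurrent with $R_1$), so $W_b$ need not precede $R_1$ and write-read monotonicity could not be applied to $R_1$. The second read rescues the argument because $R_2$ is invoked strictly after $R_1$ returns, hence strictly after $W_a$ is invoked, hence strictly after $W_b$ completes. Getting these ``invoked-before-returns'' relations composed in the correct order is where care is required; everything else is a one-line appeal to the already-established monotonicity property.
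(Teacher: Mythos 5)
Your proposal is correct and is essentially the paper's own proof: both arguments rest on exactly the same four facts (validity placing $W_a$'s invocation before $R_1$'s return, the precedence of $R_1$ before $R_2$, the contradiction hypothesis that $W_b$ precedes $W_a$, and write-read monotonicity applied to $W_b$ and $R_2$). The only difference is presentational—the paper arranges the four relations as a circular inequality among time points and lets the cycle be the contradiction, while you chain three of them into the precedence of $W_b$ over $R_2$ and then contradict $a < b$ via $b \le a$; your closing remark about why the second read is indispensable matches the role it plays in the paper's argument.
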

\begin{proof}
  Assume not. Let $W(a)$ be the \writeop{} of $a$ with the earliest invocation. Let $W(b)$ be a \writeop{} of $b$ such that $W(b)$ precedes $W(a)$. We have the following:
  \begin{itemize}
  \item $t_1 > t_2$, where $t_1$ is the return time of $W(b)$ and $t_2$ is the invocation time of $R_2$. This is because of the write-read monotonicity of $M$.
  \item $t_2 > t_3$, where $t_3$ is the return time of $R_1$. This is by our assumption that $R_1$ precedes $R_2$.
  \item $t_3 > t_4$, where $t_4$ is the invocation time of $W(a)$. This is because of the validity property of $M$.
  \item $t_4 > t_1$, by our assumption that $W(b)$ precedes $W(a)$.
  \end{itemize}
  We have reached a circular inequality, thus the lemma must hold.
\end{proof}

Finally, we prove the following lemma which helps explain what happens if \t{TRYLOCK} at \Cref{app:line:sg-write:split} returns \false.
\begin{lemma}\label{lem:false-propose-read}
  In \Cref{app:alg:sg}, let $W$ be a \writeop{} operation and \t{w} the tuple it initializes at \Cref{app:line:sg-write:guess-ts}. If $W$ performs the \t{TRYLOCK} at \Cref{app:line:sg-write:split} and gets \false, then some \readop{} operation reads \t{w} from \t{M} in two separate iterations of its while loop.
\end{lemma}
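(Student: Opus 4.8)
The plan is to argue from the \emph{specification} of the timestamp lock rather than from its implementation. Write $tid$ for $W$'s thread id, so that $W$ operates on the lock \t{TSL[tid]} and passes the timestamp stored in \t{w} (whose tie-breaker is $tid$) in \t{WRITE} mode at \Cref{app:line:sg-write:split}. Since this \t{TRYLOCK} returns \false, the contrapositive of \true safety guarantees a conflicting call \t{TRYLOCK(ts',mode')} on \t{TSL[tid]}, invoked before $W$'s call returned, with either (i) $ts'$ strictly greater than the timestamp of \t{w}, or (ii) $ts'$ equal to it and $mode'$ being \t{READ}. The key structural fact I would establish up front is that only two kinds of callers ever touch \t{TSL[tid]}: thread $tid$ itself (always in \t{WRITE} mode, always with one of its own guessed timestamps, whose tie-breaker is $tid$), and \readop{}s that have read from \t{M} a tuple whose timestamp carries the tie-breaker $tid$ (in \t{READ} mode). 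Any such tuple must have been produced by thread $tid$, since a timestamp's tie-breaker equals the id of the writing thread.

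Case (ii) yields the conclusion almost directly, so I would dispatch it first. A \readop{} invokes \t{TRYLOCK} only at \Cref{app:line:sg-read:split}, inside the branch guarded by \t{m in seen.values} with \t{m.ts} equal to the timestamp of \t{w}. By the per-thread strict monotonicity of \t{guessTs} together with the inclusion of the thread id in every timestamp, \t{w}'s timestamp is the guessed timestamp of no \writeop{} other than $W$; hence the unique non-\t{VERIFIED} tuple carrying that timestamp is \t{w} itself, and a \t{VERIFIED} tuple would have forced an early return at \Cref{app:line:sg-read:fp}. So the reader's \t{m} is exactly \t{w}. Finally, the guard \t{m in seen.values} means the reader had already stored \t{w} in \t{seen} (at \Cref{app:line:sg-read:update-seen}) in an earlier iteration and reads it again now, i.e., it reads \t{w} from \t{M} in two separate iterations of its while loop, as required.

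It remains to exclude case (i), which I expect to be the crux of the argument. The conflicting call cannot be a \writeop{} by thread $tid$ with $ts'$ exceeding \t{w}'s timestamp: by \Cref{ass:one-outstanding}, $W$ is $tid$'s only outstanding operation; its earlier \writeop{}s used strictly smaller guessed timestamps; and by \Cref{ass:monotone-ts} any later \writeop{} carrying a larger timestamp is invoked only after $W$ completes, hence after $W$'s \t{TRYLOCK} returns. Nor can the conflicting call be a \readop{} with such a $ts'$: that reader would have had to read from \t{M}, before $W$'s \t{TRYLOCK} returned, a tuple of thread $tid$ whose timestamp exceeds \t{w}'s. But in the \false scenario $W$ never executes the verified write at \Cref{app:line:sg-write:verified}, so throughout $W$'s execution thread $tid$ contributes only the tuple \t{w}; and by \Cref{ass:monotone-ts} every tuple from $tid$'s earlier operations carries a strictly smaller timestamp. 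Thus no tuple of thread $tid$ with a timestamp exceeding \t{w}'s exists before $W$ completes, so no reader could have read one---a contradiction. With case (i) ruled out, case (ii) must hold, and the lemma follows.
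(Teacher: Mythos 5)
Your proof is correct and follows essentially the same route as the paper's: invoke the contrapositive of \true safety to obtain a conflicting \t{TRYLOCK} call, rule out the higher-timestamp case using the per-thread indexing of \t{TSL}, \Cref{ass:monotone-ts}, and \Cref{ass:one-outstanding}, and then conclude from the equal-timestamp/read-mode case that a reader read \t{w} in two iterations via the \t{seen} guard. The only difference is that you spell out the sub-case the paper dismisses with a bare ``impossible'' (why no reader can hold a higher $tid$-timestamp) and why the reader's tuple must be exactly \t{w}, which is a welcome elaboration rather than a new approach.
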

\begin{proof}
  Let $p$ be the process that executes $W$. By the \true safety property of the timestamp lock, if $W$'s trylock call returns \false, then there exists some concurrent or preceding trylock call $A$ with either (1) a higher timestamp than \t{w}, or (2) the same timestamp but in read mode. Case (1) is impossible: if $A$ is invoked by a \writeop{}, this \writeop{} must be performed by $p$ (since timestamp locks are indexed per process in \Cref{app:alg:sg}) earlier than $W$ with a higher timestamp---impossible by \Cref{ass:monotone-ts}; if $A$ is invoked by a \readop{}, this \readop{} must have seen the higher timestamp associated to a value written by $p$---impossible. Thus, case (2) must be true: $A$ must have the same timestamp as $W$ but used the read lock mode. This is only possible if a \readop{} $R$ has read the value of $W$ from \t{M}. Furthermore, $R$ must have read this value twice, since $R$ invokes \t{TRYLOCK} only if $R$ finds the current value in its \t{seen} collection, meaning that the same value was read in a previous iteration.
\end{proof}

\subsubsection{Linearization Rules \& Proof}
Now we can give our linearization construction. Given an execution $\mathcal{E}$ of \protocol, we linearize it as follows:
\begin{enumerate}
  \item Linearize \writeop{}s in order of increasing timestamps, in the sense of \Cref{def:write-timestamp}.
  \item If a \readop{} $R$ returns $\bot$, linearize $R$ before the first \writeop{}.
  \item Otherwise, linearize a \readop{} $R$ after the \writeop{} whose value it returns (\Cref{obs:read-from-write}), but before the following \writeop{}. 
  \item If two \readop{}s $R_1$ and $R_2$ return the same value and $R_1$ precedes $R_2$ in $\mathcal{E}$, linearize $R_1$ before $R_2$.
\end{enumerate}

We denote by $\mathcal{L(E)}$ the resulting linearization. We now show that $\mathcal{L(E)}$ is valid. Clearly, $\mathcal{L(E)}$ is a legal sequential execution of a register, since every \readop{} returns the value of the latest \writeop{}, with the exception of \readop{}s occurring before the first \writeop{}, which return $\bot$. Furthermore, $\mathcal{L(E)}$ is equivalent to $\mathcal{E}$, since all operations return the same values.

It only remains to prove that $\mathcal{L(E)}$ preserves the real-time ordering present in $\mathcal{E}$. We show that if some operation $O_1$ precedes another operation $O_2$ in $\mathcal{E}$, then the same is true in $\mathcal{L(E)}$. We consider all four cases $(O_1, O_2) \in \{\text{write}, \text{read}\}\times\{\text{write}, \text{read}\}$.

\subsubsection*{$(O_1, O_2) = (\text{write}, \text{write})$}
We rely on the following lemma:
\begin{lemma}
  If $W_1$ and $W_2$ are two \writeop{}s such that $W_1$ precedes $W_2$, then $TS(W_1) < TS(W_2)$.
\end{lemma}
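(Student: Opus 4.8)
The plan is to first dispose of the case where $W_1$ and $W_2$ run on the same process, which follows immediately from \Cref{ass:monotone-ts}, and then to treat the general case by a case analysis on which branch $W_2$ takes. The linchpin is a uniform claim about writers: \emph{every \writeop{} $W$ performs a \writeop{} to \t{M} of a tuple whose timestamp equals $TS(W)$, and returns only after that \writeop{} has completed.} I would establish this by inspecting the three possibilities in \Cref{def:write-timestamp}: on the fast path and on the slow path when \t{TRYLOCK} fails, $TS(W)$ is the guessed timestamp and the matching tuple is the one written in the parallel block at \Cref{app:line:sg-write:parallel}; on the slow path when \t{TRYLOCK} succeeds, $TS(W)$ is the verified timestamp and the matching tuple is the one written at \Cref{app:line:sg-write:verified}. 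In each case the writer blocks on that \writeop{} before returning. I would name the resulting tuple $\beta_1$ for $W_1$, so that $\beta_1.ts = TS(W_1)$ and the \writeop{} of $\beta_1$ to \t{M} completes before $W_1$ returns, hence before $W_2$ is invoked.

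Next I would exploit the write-read monotonicity of the reliable max register \t{M}: because the \writeop{} of $\beta_1$ precedes every \readop{} of \t{M} issued by $W_2$, each such \readop{} returns a tuple $m_2$ with $m_2 \geq \beta_1$, and in particular $m_2.ts \geq TS(W_1)$. This settles two of $W_2$'s branches. On the fast path, $W_2$ takes the branch at \Cref{app:line:sg-write:fp} only if $m_2 \leq w_2$, where $w_2$ carries the guessed timestamp $TS(W_2)$; hence $TS(W_1) \leq m_2.ts \leq TS(W_2)$. On the slow path with a successful \t{TRYLOCK}, $TS(W_2) = (m_2.ts.i + 1, \mathit{tid})$, so $TS(W_2).i = m_2.ts.i + 1 > m_2.ts.i \geq TS(W_1).i$ and thus $TS(W_2) > TS(W_1)$. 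In the fast-path subcase I would upgrade $\leq$ to $<$ using the observation that no two distinct \writeop{}s share a timestamp.

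The delicate case is the slow path where \t{TRYLOCK} at \Cref{app:line:sg-write:split} returns \false, since then $TS(W_2)$ is the guessed timestamp, which a priori could be stale. Here I would invoke \Cref{lem:false-propose-read} to obtain a \readop{} $R$ that read $W_2$'s guessed tuple $w_2'$ from \t{M} in two separate iterations of its while loop. Since $w_2'$ is a \t{GUESSED} tuple carrying $W_2$'s timestamp, only $W_2$ ever writes it, so the first \writeop{} of $w_2'$ to \t{M} occurs after $W_2$ is invoked, hence strictly after the \writeop{} of $\beta_1$ completes. Applying the double-read lemma (\Cref{lem:double-read}) with $a = w_2'$ and candidate $b = \beta_1$: if we had $\beta_1 > w_2'$, the lemma would forbid any \writeop{} of $\beta_1$ from preceding the first \writeop{} of $w_2'$, contradicting what was just established. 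Therefore $\beta_1 \leq w_2'$, and reading off the timestamps---using that $\beta_1$'s \t{VERIFIED} flag (if present) exceeds \t{GUESSED}, together with timestamp uniqueness to exclude equality---yields $TS(W_1) = \beta_1.ts < TS(W_2)$.

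I expect this last case to be the main obstacle, both because it is the only branch where the protocol commits to an unverified guessed timestamp and because it is where the subtle double-read reasoning is required. The two care points are (i) arguing that the first \writeop{} of $w_2'$ is genuinely $W_2$'s own, so that it is ordered strictly after $W_1$'s completion, and (ii) the lexicographic bookkeeping that turns $\beta_1 \leq w_2'$ into a strict inequality on timestamps via the flag ordering and the no-duplicate-timestamps observation.
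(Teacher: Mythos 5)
Your proof is correct and takes essentially the same route as the paper's: write-read monotonicity of \t{M}, followed by a three-way case split on $W_2$'s control flow, with \Cref{lem:false-propose-read} and the double-read lemma (\Cref{lem:double-read}) resolving the failed-\t{TRYLOCK} case exactly as the paper does. Your explicit ``uniform claim'' (that every \writeop{} completes a \writeop{} to \t{M} carrying its own $TS$ before returning) and the strictness bookkeeping via timestamp uniqueness are details the paper leaves implicit, but the argument is the same.
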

\begin{proof}
  By the write-read monotonicity of \t{M}, the \readop{} of $W_2$ returns a value \t{m} with a timestamp not lower than $TS(W_1)$. We distinguish three cases:
  \begin{itemize}
  \item $W_2$ performs the fast path. In this case, $\t{m} \leq \t{w}$, where \t{w} is the value computed at \Cref{app:line:sg-write:guess-ts}. By \Cref{def:write-timestamp}, \t{w} is also equal to $TS(W_2)$. It follows that $TS(W_2) > TS(W_1)$.
  
  \item $W_2$ performs the \writeop{} at \Cref{app:line:sg-write:verified}. In this case, we have $TS(W_2) = \t{m.ts + 1} > \t{m.ts} \geq TS(W_1)$.
  
  \item $W_2$ enters the slow path, but does not perform the \writeop{} at \Cref{app:line:sg-write:verified}. In this case, the \t{TRYLOCK} call at \Cref{app:line:sg-write:split} must have returned \false. By \Cref{lem:false-propose-read}, some \readop{} must have read \t{w} twice, sequentially. By \Cref{lem:double-read}, no \writeop{} to \t{M} with a higher timestamp than \t{w} could have preceded $W_2$'s \writeop{} of \t{w} at \Cref{app:line:sg-write:parallel}. So $TS(W_1)$ must be lower than $\t{w.ts} = TS(W_2)$.
  \end{itemize}
\end{proof}
Thus, according to the first linearization rule, $W_2$ is linearized after $W_1$, as required.

\subsubsection*{$(O_1, O_2) = (\text{read}, \text{read})$}
Let $R_1$ and $R_2$ be two \readop{}s such that $R_1$ precedes $R_2$. We show that $R_1$ is linearized before $R_2$.

In case $R_1$ and $R_2$ return the same value, then by the fourth linearization rule, $R_1$ is linearized before $R_2$.

In case $R_1$ and $R_2$ return different values, let $v_1$ and $v_2$ be their respective return values. Let $r_1$ be the (latest) \t{M.READ} call inside $R_1$ that returns $v_1$ and $r_2$ be the (latest) \t{M.READ} call inside $R_2$ that returns $v_2$. Since $r_1$ precedes $r_2$, by the read-read monotonicity of \t{M}, it must be that $v_1 < v_2$. By the second and third linearization rules, $R_1$ is linearized before the \writeop{} of $v_2$, and $R_2$ is linearized after the \writeop{} of $v_2$, so $R_1$ is linearized before $R_2$.

\subsubsection*{$(O_1, O_2) = (\text{write}, \text{read})$}
Let $W$ and $R$ be a \writeop{} and a \readop{}, respectively, such that $W$ precedes $R$. We show that $W$ is linearized before $R$.

Let $v$ be the return value of $R$ and let $r$ be the earliest \t{M.READ} call inside $R$ that returns $v$. By the write-read monotonicity of \t{M}, the timestamp associated with $v$ cannot be lower than the timestamp of $W$. In other words, $R$ returns a value installed by $W$ or a later \writeop{}, so by the third linearization rule $R$ must be linearized after $W$.

\subsubsection*{$(O_1, O_2) = (\text{read}, \text{write})$} Let $R$ and $W$ be a \writeop{} and a \readop{}, respectively, such that $R$ precedes $W$. We show that $R$ is linearized before $W$.

Let $T$ be the timestamp associated to the value returned by $R$. Since $W$ is invoked after $R$
returns, by read-read monotonicity, $W$'s \readop{} at \Cref{app:line:sg-write:parallel} returns a value \t{m} with a timestamp
$T_m \geq T$.
We distinguish three cases:
\begin{itemize}
  \item $W$ enters the fast path. In this case, $W$'s
  guessed timestamp must have been greater than $T_m$, so in turn greater than $T$.
  
  \item $W$ performs the \writeop{} at \Cref{app:line:sg-write:verified}. In this case, $W$ picks a timestamp greater than $T_m$, so in turn greater than $T$.

  \item $W$ enters the fast path but does not perform the \writeop{} at \Cref{app:line:sg-write:verified}. Then, by \Cref{lem:false-propose-read}, some \readop{} operation must have read \t{w} (i.e., $W$'s value, initialized at \Cref{app:line:sg-write:guess-ts}) from \t{M} twice, sequentially. The second such \readop{} must have been invoked after $W$ was invoked, so by read-read monotonicity, the timestamp this second \readop{} sees (which is the timestamp of \t{w}) must be greater than $T$.
\end{itemize}

We have shown that in all cases, $W$ has a higher timestamp than the value returned by $R$. $R$ therefore cannot return $W$'s value, or a later one, and is linearized before $W$.

\subsection{Wait-Freedom}
The \t{WRITE} function is clearly wait-free, since it does not have any blocking steps, and the external functions it calls (\t{M.READ}, \t{M.WRITE}, \t{TSL.TRYLOCK}) are all wait-free.

The case of the \t{READ} function is less straightforward. We start with the following observation, which follows from the fact that the \t{WRITE} and \t{READ} functions call \t{TRYLOCK(m.ts, *)} only after having written \t{m} to \t{M} or read \t{m} from \t{M}, respectively.

\begin{observation}\label{obs:propose}
  In \Cref{app:alg:sg}, if \t{TSL[$p$].TRYLOCK(m.ts, *)} is invoked at time $t$, then either \t{M.WRITE(m)} or \t{M.READ() $\rightarrow$ m} completed before $t$. 
\end{observation}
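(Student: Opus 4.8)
The plan is to prove Observation~\ref{obs:propose} by a direct, syntactic case analysis over the (only two) sites at which \t{TRYLOCK} is invoked in \Cref{app:alg:sg}, tracing the sequential control flow of each site back to the matching access to \t{M}. First I would argue, by inspection of the pseudocode, that \t{TRYLOCK(m.ts, *)} is invoked in exactly two places: at \Cref{app:line:sg-write:split} inside \t{WRITE} (the slow path), and at \Cref{app:line:sg-read:split} inside \t{READ} (within the \t{while} loop). Establishing exhaustiveness is immediate since no other line mentions \t{TRYLOCK}.

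For the \t{WRITE} site, the argument tuple is \t{w}, constructed at \Cref{app:line:sg-write:guess-ts}. Reaching \Cref{app:line:sg-write:split} requires first executing the parallel block at \Cref{app:line:sg-write:parallel} (which writes \t{w} to \t{M}) and then falling through the \t{else} at \Cref{app:line:sg-write:sp}; hence \t{M.WRITE(w)}, i.e.\ \t{M.WRITE(m)} with $m=\t{w}$, has completed before time $t$. For the \t{READ} site, the tuple \t{m} is literally the value returned by \t{m = M.READ()} at \Cref{app:line:sg-read:read} in the same loop iteration, and the guards at \Cref{app:line:sg-read:fp} and~\Cref{app:line:sg-read:fresh} are traversed sequentially only after that \readop{} returns, so \t{M.READ() $\rightarrow$ m} has completed before $t$. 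Combining the two cases gives the observation.

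The one point I would state explicitly — and the only real obstacle, minor as it is — is the semantics of the \t{in parallel} block at \Cref{app:line:sg-write:parallel}: I must justify that the inner \writeop of \t{M} has \emph{completed}, not merely been \emph{issued}, before the slow-path \t{TRYLOCK} fires. This follows from the convention used throughout the paper that a parallel block returns only once all of its constituent operations have returned, so the \t{if}/\t{else} test at \Cref{app:line:sg-write:fp} is reached strictly after both the \readop and the \writeop of \t{M} finish. A secondary bookkeeping remark is that the flag bits (\t{GUESSED}/\t{VERIFIED}) are part of the tuple manipulated on \t{M}, so the \t{m} whose \t{.ts} field is handed to \t{TRYLOCK} is exactly the tuple that was written or read, making the identification in each case automatic. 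I would close by flagging the intended use: together with the already-established wait-freedom of \t{M} and of the timestamp lock, this observation guarantees that every \t{TRYLOCK} within a \t{READ} iteration is preceded by a genuine \t{M.READ}, which is the hook needed to bound the iteration count and thereby prove wait-freedom of \t{READ}.
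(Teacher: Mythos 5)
Your proposal is correct and matches the paper's own justification, which is exactly this syntactic inspection: the paper notes that \t{WRITE} and \t{READ} invoke \t{TRYLOCK(m.ts, *)} only after having written \t{m} to \t{M} (in the parallel block) or read \t{m} from \t{M}, respectively. Your explicit treatment of the \t{in parallel} semantics (that the block completes only when both the \readop{} and \writeop{} of \t{M} have completed) is a sound and slightly more careful spelling-out of the same argument, not a different route.
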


Using \Cref{obs:propose}, we prove the following lemma, which crucially bounds the number of times a \readop{} can see values from each process before the \readop{} is forced to return. 

\begin{lemma}\label{lem:max-two-iters}
  If a \readop{} operation $R$ reads a value from the same process $p$ in more than two iterations of $R$'s while loop, then $R$ returns on the third iteration in which $R$ reads from $p$. 
\end{lemma}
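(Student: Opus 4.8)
The plan is to follow the control flow of \t{READ} across the three earliest while-loop iterations $j_1 < j_2 < j_3$ in which $R$ obtains a tuple whose timestamp carries thread id $p$ at \Cref{app:line:sg-read:read}, calling the three read tuples $m_1, m_2, m_3$. First I would record the bookkeeping invariant implied by the update at \Cref{app:line:sg-read:update-seen}: every key $k$ of \t{seen} stores a tuple whose timestamp thread id equals $k$, so a tuple with tid $p$ can match an entry of \t{seen} only if that entry is \t{seen}[$p$]. Because no iteration strictly between $j_1$ and $j_3$ reads from $p$, the slot \t{seen}[$p$] is set to $m_1$ at $j_1$ and stays $m_1$ through $j_2$ and $j_3$; and neither $m_1$ nor $m_2$ is \t{VERIFIED}, else $R$ returns early at \Cref{app:line:sg-read:fp}. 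I would then pin down $j_2$: the test \t{m in seen.values} at \Cref{app:line:sg-read:fresh} holds iff $m_2 = m_1$. If $m_2 \ne m_1$, the \t{elif} at \Cref{app:line:sg-read:in-seen} fires (since $p$ is already a key) and $R$ returns at \Cref{app:line:sg-read:return-2}; if $m_2 = m_1$ and the call at \Cref{app:line:sg-read:split} returns \true, $R$ returns at \Cref{app:line:sg-read:return-1}. Hence merely reaching $j_3$ forces $m_2 = m_1$ and a \false result from \t{TSL[$p$].TRYLOCK($m_1.ts$, READ)}.

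The heart of the argument is to show $m_3 > m_1$. This suffices: with $m_3 \ne m_1$ and $p$ already a key of \t{seen}, the tuple $m_3$ is not in \t{seen.values}, so $R$ returns through the wait-free branch at \Cref{app:line:sg-read:return-2} (or through \Cref{app:line:sg-read:fp} if $m_3$ happens to be \t{VERIFIED}) on this very iteration $j_3$, which is the claim. The danger is exactly the case $m_3 = m_1$ with a failing lock, and proving $m_3 > m_1$ makes it vanish.

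To establish $m_3 > m_1$ I would apply the contrapositive of the timestamp lock's \true safety property to the failed call at $j_2$: there is a call $C$ on \t{TSL[$p$]}, invoked before $R$'s $j_2$ call returns, that is either \t{TRYLOCK($m_1.ts$, WRITE)} or \t{TRYLOCK($ts'$, $\star$)} with $ts' > m_1.ts$. Since \t{TSL[$p$]} is touched only by writer $p$ in write mode and by readers that read a tid-$p$ tuple in read mode, $C$ splits into three cases. If $C$ is \t{TRYLOCK($m_1.ts$, WRITE)}, it is writer $p$ on its slow path (\Cref{app:line:sg-write:split}), which by the in-order control flow of \t{WRITE} must first have read some $m > m_1$ from \t{M} at \Cref{app:line:sg-write:parallel}. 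If $C$ is \t{TRYLOCK($ts'$, WRITE)} with $ts' > m_1.ts$, it is writer $p$ with a later guessed timestamp, which first performed \t{M.WRITE} of a tuple of timestamp $ts' > m_1.ts$. If $C$ is \t{TRYLOCK($ts'$, READ)} with $ts' > m_1.ts$, it is a reader that read a tuple of timestamp $ts'$ from \t{M}. In every case \Cref{obs:propose} (plus the sequencing inside \t{WRITE}/\t{READ}) gives that the underlying \t{M}-operation \emph{completed} before $C$ was invoked; $C$ was invoked before $R$'s $j_2$ call returned; and $R$'s $j_3$ read of \t{M} starts only after that return. So the \t{M}-operation precedes $R$'s $j_3$ read in real time, and read-read monotonicity (the two \t{M.READ} cases) or write-read monotonicity (the \t{M.WRITE} case) of the reliable max register yields $m_3 \ge ts' > m_1$ or $m_3 \ge m > m_1$.

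I expect the delicate point to be precisely this real-time chaining. The \true safety property only tells us $C$ is invoked before $R$'s $j_2$ call returns, so the work lies in upgrading that into a genuine precedence of a \emph{completed} \t{M}-operation before $R$'s $j_3$ read, which is exactly what \Cref{obs:propose} supplies. A secondary care point is confirming that $C$ lives on the same lock object \t{TSL[$p$]} and that the three issuers above are exhaustive; this holds because read calls index the lock by the tid of the tuple they read and write calls by the writer's own id, both equal to $p$ here. The remaining pieces—the \t{seen} invariant, the $j_2$ case split, and the lexicographic tuple comparisons—are routine.
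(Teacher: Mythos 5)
Your proof is correct and takes essentially the same route as the paper's: both extract a conflicting \t{TRYLOCK} call via the contrapositive of \true safety, then combine \Cref{obs:propose} with \t{M}'s read-read/write-read monotonicity to show the third read from $p$ cannot return the original tuple, forcing $R$ to return on that iteration. The only differences are presentational—you argue directly by establishing $m_3 > m_1$ where the paper argues by contradiction, and you make explicit the \t{seen}-dictionary invariant and the real-time chaining that the paper leaves implicit.
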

\begin{proof}
  Assume not: $R$ reads a value from the same process $p$ in three iterations, without returning on the third such iteration. 
  Clearly, none of the values can be \t{VERIFIED}, otherwise $R$ would return. 
  Furthermore, since $R$ returns as soon as it sees two distinct values from the same process, the values must all be equal to each other: call this value \t{m}.
  The second time $R$ reads \t{m}, it invokes \t{TSL[m.ts.tid].TRYLOCK(m.ts, READ)} at \Cref{app:line:sg-read:split}. This call, denoted by $C$, must return \false, otherwise $R$ would return on the second iteration. 

  By \true safety, some \t{TSL[$p$].TRYLOCK(m'.ts, mode’)} call must be concurrent with, or precede, $C$, with either (i) $\t{m'.ts} > \t{m.ts}$, or (ii) $\t{m'.ts} = \t{m.ts}$ and $\t{mode'} = \t{WRITE}$. 
  \begin{itemize}
  \item In case (i), by \Cref{obs:propose}, some preceding \readop{} or \writeop{} of \t{m'} must exist. Thus, in the third iteration of $R$ which reads from $p$, by \t{M}'s monotonicity properties, \t{M.READ} cannot return \t{m} again.
  This means that in this third iteration that reads from $p$, $R$ sees a value different from \t{m}, and returns.
  Contradiction.
  
  \item In case (ii), $p$ must have called \t{TSL[$p$].TRYLOCK(m.ts, WRITE)} as part of the slow path of some \writeop{} $W$, so $W$'s \t{M.READ} at \Cref{app:line:sg-write:guess-ts} returned a value \t{m'} with a higher timestamp than that of \t{m}. By read-read monotonicity, in the third iteration of $R$ which reads from $p$, \t{M.READ} cannot return \t{m} again.
  This again means that in this third iteration that reads from $p$, $R$ sees a value different from \t{m}, and returns.
  Contradiction. 
  \end{itemize}
  We have reached a contradiction in all cases, thus the lemma must hold.
\end{proof}

Wait-freedom follows easily from \Cref{lem:max-two-iters}: since a \readop{} can see values from each process in at most two iterations without returning, each \readop{} must return after at most $2n + 1$ iterations, where $n$ is the number of processes.

\subsection{Roundtrip Complexity}
Recall that \t{M.WRITE} incurs 0 or 1 RTTs, \t{M.READ} incurs 1 or 2 RTTs, and \t{TRYLOCK(ts,p)} takes between 1 and $\t{ts} + 1$ RTTs.

First, let us examine the round complexity of \protocol's \t{WRITE} function. \Cref{app:line:sg-write:parallel} takes between 1 and 2 RTTs. If the fast path is entered, no further RTTs are incurred on the critical path. Otherwise, the slow path is entered,
and
\t{TRYLOCK} takes between 1 and \t{w.ts + 1} RTTs, where \t{w.ts} is the output of \t{guessTs()}. Finally, if \t{TRYLOCK} returns \true, then the \writeop{} at \Cref{app:line:sg-write:verified} takes an additional 0 or 1 RTTs. Overall, the \t{WRITE} function incurs 1 or 2 RTTs on the fast path and between 2 and $2 + \t{w.ts} + 1 + 1 = \t{w.ts} + 4$ RTTs on the slow path.

Now let us examine the round complexity of \protocol's \t{READ} function. Each iteration of the while loop incurs 1 or 2 RTTs for the \t{M.READ} at \Cref{app:line:sg-read:read}, as well as between 1 and $1 + \t{m.ts}$ RTTs in case the \t{TRYLOCK} at \Cref{app:line:sg-read:split} is invoked. So each iteration of the while loop incurs between 1 and
  $3 + \t{m.ts}$ RTTs.
  As argued above, a \readop{} performs at most $2n+1$ iterations of the while loop, so the overall complexity of a \readop{} is between 1 and
  $(2n+1)(3+\t{m.ts})$ RTTs.

\subsection*{When do \writeop{}s and \readop{}s return in a single RTT?}

A \writeop{} returns in a single RTT if: (i) the \t{M.READ} at \Cref{app:line:sg-write:parallel} returns in a single RTT and (ii) the \writeop{} takes the fast path. 
\begin{itemize}
  \item As discussed in \Cref{sec:wmr-complexity}, condition (i) happens if the write-back part of the \readop{} is not necessary, because a majority of the underlying max registers already contain the max value. This can occur if there is low contention, or if some majority of the underlying max registers is consistently more responsive than the rest of the max registers.
  \item Condition (ii) occurs if writer $p$ guesses a fresh timestamp, and no other process writes a higher timestamp before $p$ reads from \t{M}. Process $p$ guesses a fresh timestamp if the system has good clock synchrony.
\end{itemize}

A \readop{} returns in a single RTT if and only if (1) the \t{M.READ} on the first iteration of the while loop returns in 1 RTT, and (2) the value returned by this \readop{} is \t{VERIFIED}. Condition (1) occurs in the same scenarios as discussed above for (i). Condition (2) is true if (a) the previous \writeop{} performed the slow path \writeop{} at \Cref{app:line:sg-write:verified}, or (b) the previous \writeop{} performed the fast path and enough time has passed, such that its background \t{VERIFIED} \writeop{} has completed, or (c) the previous \writeop{} performed the fast path and, in the meantime, some \readop{} saw the \t{GUESSED} value and wrote it back as a \t{VERIFIED} value.

Overall, in the common case of low contention and good clock synchrony, every \readop{} and \writeop{} operation completes in a single RTT.

\end{document}